\setlist{nosep}
\newcommand{\Agents}{\mathcal{N}}%
\newcommand{\Alternatives}{\mathcal{A}}%
\newcommand{\naive}{na\"ive\xspace}
\newcommand{\OW}{\text{O/W}}%
\newcommand{\xPrime}[1]{#1^{\prime}}%
\newcommand{\st}{\text{ s.t. }}%
\newcommand{\SetSt}{\;\middle\vert\;}%
\newcommand{\sizeof}[1]{\left|#1\right|}%
\newcommand{\argmax}{\mathop{\mathrm{argmax}}}%
\newcommand{\OneBB}[1]{\mathds{1}}%
\newcommand{\RobertsPhi}[1]{\varphi_{#1}}%
\newcommand{\RPhiOf}[2]{\RobertsPhi{#1}\left(#2\right)}%
\newcommand{\UtilDom}{\mathcal{U}}%
\newcommand{\TypeDom}{\mathcal{T}}%
\newcommand{\UtilDomI}[1]{\UtilDom_{#1}}%
\newcommand{\TypeDomI}[1]{\TypeDom_{#1}}%
\newcommand{\UtilDomQL}{\UtilDom^{QL}}%
\newcommand{\prefBy}{\preccurlyeq}%
\newcommand{\prefOver}{\succcurlyeq}%
\newcommand{\strictPrefBy}{\prec}%
\newcommand{\strictPrefOver}{\succ}%
\newcommand{\Ind}{\sim}%
\newcommand{\WTP}[1]{p^{#1}}%
\newcommand{\MECH}[2]{\left\langle #1,#2\right\rangle }%
\newcommand{\OUTCOME}[2]{\left\langle #1,#2\right\rangle }%
\newcommand{\UtilTemplate}[4]{#1_{#2}\left(#3,#4\right)}%
\newcommand{\xOf}[1]{x\left(#1\right)}%
\newcommand{\piOf}[2]{p_{#1}\left(#2\right)}%
\newcommand{\AltCost}[1]{c_{#1}}%
\newcommand{\AgentWeight}[1]{w_{#1}}%
\newcommand{\Representation}{representation}%
\newcommand{\Representations}{representations}%
\newcommand{\POSRepresentation}{pos-representation}%
\newcommand{\POSRepresentations}{pos-representations}%
\newcommand{\Represent}{represent}%
\newcommand{\Represents}{represents}%
\newcommand{\POSRepresent}{pos-represent}%
\newcommand{\POSRepresents}{pos-represents}%
\newcommand{\Represented}{represented}%
\newcommand{\POSRepresented}{pos-represented}%
\newcommand{\SRepresents}{\SWITCH{\POSRepresents}{\Represents}}%
\spnewtheorem{claim}{Claim}{\bfseries}{\itshape}
\spnewtheorem{exampleParallelDomain}{Example}{\bfseries}{\rmfamily}
\spnewtheorem*{theorem*}{\theoremname}{\bfseries}{\itshape}
\spnewtheorem{mechanism}{Mechanism}{\itshape}{\rmfamily}
\spnewtheorem*{mechanism*}{Mechanism}{\itshape}{\rmfamily}
\spnewtheorem*{claim*}{\claimname}{\bfseries}{\rmfamily}
\spnewtheorem*{corollary*}{\corollaryname}{\bfseries}{\rmfamily}
\newcommand{\SWITCH}[2]{\undefined}
\newcommand{\ThmFootnote}[1]{\hspace{-6pt}\footnote{#1}}
\newcommand{\Nth}[1]{\ifmmode\ERROR\else#1\textsuperscript{\underline{th}}\fi\xspace}
\begin{document}

\def\footnotemark{\relax}
\title{Almost Quasi-linear Utilities in Disguise: Positive-representation\\
	\qquad{}An Extension of Roberts' Theorem%
		\thanks{We would like to thank Reshef Meir and Hongyao Ma for stimulating early discussions on the topic.
		We also would like to thank the anonymous reviewers for their detailed reviews, which helped us to improve the presentation of this work.}%
		\thanks{This work was supported in part by Israel Science Foundation (ISF) Grant 1626/18.}
		\thanks{An extended abstract of this work is forthcoming in:
			The \Nth{15} Conference on Web and Internet Economics (WINE 2019).}}
\author{Ilan Nehama}
\institute{Bar-Ilan University\\\email{ilan.nehama@mail.huji.ac.il}}
\maketitle
\begin{abstract}
This work deals with the implementation of social choice rules using dominant strategies for unrestricted preferences. The seminal Gibbard-Satterthwaite theorem shows that only few unappealing social choice rules can be implemented unless we assume some restrictions on the preferences or allow monetary transfers. When monetary transfers are allowed and quasi-linear utilities w.r.t. money are assumed, Vickrey-Clarke-Groves (VCG) mechanisms were shown to implement any affine-maximizer, and by the work of Roberts, only affine-maximizers can be implemented whenever the type sets of the agents are rich enough.

In this work, we generalize these results and define a new class of preferences: Preferences which are \emph{positive-represented by a quasi-linear utility}. That is, agents whose preference on a subspace of the outcomes, which is defined by a threshold, can be modeled using a quasi-linear utility. We show that the characterization of VCG mechanisms as the incentive-compatible mechanisms extends naturally to this domain.
We show that the original characterization of VCG mechanism is an immediate corollary of our generalized characterization. 
Our result follows from a simple reduction to the characterization of VCG mechanisms. Hence, we see our result more as a fuller more correct version of the VCG characterization than a new non-quasi-linear domain extension.

This work also highlights a common misconception in the community attributing the VCG result to the usage of transferable utility.
Our result shows that these results extend naturally to the non-transferable utility domain. 
That is, that the incentive-compatibility of the VCG mechanisms does not rely on money being a common denominator, but rather on the ability of the designer to fine the agents on a continuous (maybe agent-specific) scale.

We also provide simple characterizations of the types which are \Represented{} and \POSRepresented{}  by quasi-linear utility functions.
We show characterizations both in utility function terms and in preference terms, by that supplying a full comparison of the different classes.

We think these two insights, considering the utility as a representation and not as the preference itself (which is common in the economic community) and considering utilities which represent the preference only for the relevant domain, would turn out to fruitful in other domains as well.

\keywords{%
Mechanism Design %\and 
Strategy-proofness %\and 
Dominant Strategy Incentive Compatibility %\and 
Non Quasi-linear Utilities %\and 
Positive-representation %\and 
%VCG Mechanisms %\and 
Roberts' Theorem
}
\end{abstract}

\section{Introduction\label{sec:Introduction}}
Consider the problem of a designer who wishes to implement a given social choice rule. That is, consider a finite set of agents $\Agents$ and a finite set of possible social alternatives $\Alternatives$ s.t. each agent holds a preference (a total order) over $\Alternatives$; The designer needs to choose, as a function of the agents' preferences, one social alternative out of $\Alternatives$, e.g., one that maximizes a social welfare or some other criterion, but the preferences are a-priori unknown to the designer. A \naive procedure for the designer would be to first query the preferences of the agents and choose the alternative accordingly. The problem with such solution is that an agent might report her preference untruthfully if she thinks it might result in a better alternative for her, by that preventing the designer from choosing correctly.

In this work, we study the fundamental question of which social choice rules are implementable by a principal, using the most basic implementation concept of dominant-strategies. That is, we aim to characterize which social choice rules can be implemented s.t. no agent has an incentive to misreport her preference, independent of the reports of the other agents, and to show a mechanism for any such implementable social choice rule.
Furthermore, we analyze implementation using deterministic incentive-compatible direct revelation mechanisms. In a \emph{direct revelation mechanism} each agent is asked to report her true preference and subsequently a decision is made according to a decision rule. Incentive-compatibility for these mechanism means that an agent cannot benefit from reporting a preference different from her true preference. In the sequel, we assume the mechanisms are direct and deterministic and in Section~\ref{sec:Discussion} we discuss these restrictions and how to relax them. In particular, note that the assumption of the mechanism being direct is without loss of the generality, as according to the revelation principle~\cite{Myerson79} any implementable social choice rule is implementable using a direct revelation mechanism and the characterization of direct revelation implementations is easily lifted to the general characterization.

The seminal works of Gibbard~\cite{Gibbard1973} and Satterthwaite~\cite{Satterthwaite1975} show that without further assumptions, if all profiles of preferences are feasible, one cannot devise a mechanism for choosing a single alternative, besides trivial procedures which a-priori ignore most agents or most alternatives. Assuming the preferences of the agents are richer and include intensity of preferences (that is, a cardinal utility function $u\colon\Alternatives\rightarrow\Re$ which assigns for every alternative its utility or desirability for the agent) does not circumvent this impossibility.

On the other hand, the seminal works of Vickrey~\cite{Vickrey1960}, Clarke~\cite{clarke1971}, and Groves~\cite{groves1973} show that allowing the designer to induce monetary transfers (pay money to the agents or charge them) gives rise to a non-trivial family of implementable social choice rules, while still assuming that all profile preferences are feasible. That is, the outcome is both a chosen alternative and in addition a sequence of payments of the agents (which might be negative, i.e., payment to the agents). In addition, the preferences of the agents is extended to preferences over $\Alternatives\times\Re$, the tuples of an alternative and a payment of the agent. Note that (under some mild conditions) such preference also defines a monetary value for each alternative and hence defines a cardinal utility $v\colon\Alternatives\rightarrow\Re$ over the alternatives.

When all agents' preferences are defined by quasi-linear utility functions $u\colon\Alternatives\times\Re\rightarrow\Re$ of the form $\UtilTemplate u{}az=v\left(a\right)-z$ for some $v\colon\Alternatives\rightarrow\Re$, the Vickrey-Clarke-Groves~(VCG) mechanisms are incentive-compatible direct mechanisms which can implement any affine-maximizer of the agents' valuation functions~$v$~\cite{Vickrey1960,clarke1971,groves1973}. Moreover, Green and Laffont~\cite{RePEc:ecm:emetrp:v:45:y:1977:i:2:p:427-38} and Roberts~\cite{Roberts1979} later showed that these are the only incentive-compatible onto mechanisms when all profiles of quasi-linear utility functions are feasible.

However, quasi-linearity is a strong assumption. On the individual level, it is an assumption that in particular there is no income effect in the evaluations of the alternatives, which might be violated due to, for example, lack of liquidity, wealth effect, or risk-aversion. On the society level, it is an assumption of a \emph{transferable utility} model, that is, an assumption that the `utility loss' due to paying $\$1$ for Agent~$i$ equals to the `utility gain' due to receiving this amount for Agent~$j$.

Hence, it is of interest to search of other preference domains for which non-trivial implementable social choice rules exist, while still inducing the full unrestricted set of preferences over the alternatives. To the best of our knowledge, except the aforementioned works for quasi-linear utilities, very few papers considered social choice mechanisms with monetary transfers without restricting the possible profiles of preferences over the alternatives (In much more restricted domains, like single-peaked facility location~\cite{Moulin1980} or auctions, just to name two, there are many truthful mechanisms that are not affine-maximizers.).

The work most relevant to this paper is the result by Ma et al.~\cite{Ma2018}.
In this work, Ma et al. present a new family of utility functions, the \emph{parallel utility functions}, which is a super-set of the family of the quasi-linear utility functions. Following the works of Vickrey, Clarke, and Groves and the work of Roberts, they present a family of mechanisms which are incentive-compatible whenever all agents have parallel utility functions, and prove that the induced social choice rules are the only implementable rules (under two additional mild conditions, \emph{No subsidy} \& \emph{Individual rationality}) if all induced valuation profiles are feasible.

\subsection*{Our Results}

We think that parallel utility functions can be seen as a special case of disguised almost quasi-linear utility functions, by that strengthening the knife-edginess of the quasi-linear scenario and the characterization of VCG mechanisms.
We formalize this by two insights (which catches this intuitive notion of `disguised almost quasi-linear'). We discuss this connection we see in more details in Section~\ref{sec:Discussion}.

First, we notice that VCG mechanisms do not depend on the quasi-linear \Representation{} an agent reports and can be stated (while still maintaining the characterization result) as mechanisms that get as input the preferences of the agents and not utility functions. In particular, VCG mechanisms can be extended to receive non-quasi-linear utility functions as long as there is some quasi-linear \Representation{} of it (quasi-linear in disguise). While this observation is technically trivial, we think it is conceptually important as being consistent with the \emph{revealed preferences principle} and in particular with the fact that one cannot test whether an agent holds one utility \Representation{}or the other, so in general there is no base to assume that one utility function better \Represents{} the agent.

Second, our main insight is a weaker notion of \Representation{}, \emph{positive-\Representation{}} (and shortly \emph{\POSRepresentation{}}). We say an agent's preference is \POSRepresented{} by a utility function if they coincide whenever the utility of at least one of the two comparands is positive (and not for any two comparands as with regular \Representation{} by a utility function). This new simple notion allows us to extend the previous above works of Vickrey, Clarke, Groves, Green-Laffont, Roberts.

We extend the VCG mechanisms to mechanisms that get as input preferences which are \POSRepresented{} by quasi-linear functions. We show that these mechanisms are incentive compatible~(Theorem~\ref{thm:MyVCG}) and show that when all induced valuation profiles are feasible, essentially these are the only incentive-compatible onto mechanisms~(Theorem~\ref{thm:MyRoberts}). This result extends the characterizations of Roberts~\cite{Roberts1979} and Green-Laffont~\cite{RePEc:ecm:emetrp:v:45:y:1977:i:2:p:427-38}.

We argue that preferences which are \POSRepresented{} by quasi-linear utility functions maintain the essential properties of quasi-linear utility functions (hence we call them `disguised almost quasi-linear utility functions') and we see the simple proofs by reduction as a supporting evidence for that. This strengthens the knife-edginess of the VCG phenomenon by showing that the quasi-linear domain is essentially the only unrestricted preference domain for which an incentive-compatible non-trivial implementation is known, and we hope this gives a better research direction of exploring mechanism design beyond quasi-linear preferences.

Furthermore, we think that our work addresses a common misconception in the community of the reason why introducing money enables the design of general mechanisms and circumventing impossibility results \`a la Gibbard-Satterthwaite. Many works attribute the VCG result to the extended model being a \emph{transferable utility} model. For example, in the \emph{Algorithmic Game Theory} book~\cite{AGTbookCh9}, we found that VCG Mechanisms (and the driving force behind their incentive compatibility) are introduced using phrases like ``Money can be transferred between players. ... (This) will allow us to do things that we could not do otherwise.'' Our work shows that the main driving-force is the individual additive-separability and that the introduction of transferable utility does not play a role in the characterization. For instance, the principal might use different incomparable payment methods for different agents. We discuss the connection to the stronger assumption of quasi-linearity and transferable utility in more details in Section~\ref{sec:Discussion}.

Identifying the preference of an agent with her cardinal utility is basic in the theory of modeling economic agents, and in most scenarios, unless additional strong assumptions are added, the utility is only a representation of the preference and does not hold any additional information of the decision behavior of the agent. Moreover, in many cases, one's assumptions on the agents' preferences are mostly irrelevant when considering outcomes below some threshold or an outside option. Hence, we think these two insights could be fruitful in other domains as well.

To ease the reading flow, some of the less significant proofs are postponed to the appendix.

\section{Model}
We denote the set of agents by $\Agents$, the set of alternatives by $\Alternatives$, and their respective size by $n=\sizeof{\Agents}$ and $m=\sizeof{\Alternatives}$. Each agent holds a \emph{preference}, i.e. an order, over $\Alternatives\times\Re$, the set of tuples of an alternative and a payment of the agent. We also refer to the preference as the agent's \emph{type. }We say a cardinal function $u\colon\Alternatives\times\Re\rightarrow\Re$ \emph{\Represents{}}~\cite[Def.~1.B.2]{MasColell1995} a preference $\prefOver$ (and refer to $u$ as the \emph{utility function}) if for any two alternatives $a,b\in\Alternatives$ and two payments $z_{a},z_{b}\in\Re$,
	\[ \OUTCOME a{z_{a}} \prefOver\OUTCOME b{z_{b}} \,\iff\,\UtilTemplate u{}a{z_{a}}\geqslant\UtilTemplate u{}b{z_{b}}\text{.} \]

It is not hard to see that two utility functions $u,w\colon\Alternatives\times\Re\rightarrow\Re$ \Represent{} the same preference iff one is a monotone transformation of the other, that is, there exists a monotone bijection $\varphi\colon\Re\rightarrow\Re$ s.t. $u=\varphi\circ w$. In this work, we assume that the preferences of the agents can be \Represented{} by a utility function $u$ s.t. $u$ is strictly decreasing in the payment, i.e., an agent prefers to pay less. We denote the set of all utility functions which are strictly decreasing in the payment by $\UtilDom$ and the set of preferences (types) \Represented{} by utility functions from $\UtilDom$ by $\TypeDom$. We use the notations of types and type sets in our results to emphasize that our results are invariant to the choice of \Representations{}.
In places it is not important, in order to ease the readability, we sometimes use utility functions to characterize directly the agents.

A special family of utility functions are the \emph{quasi-linear} utility functions.\footnote
	{In the economics literature (e.g.,~\cite{RePEc:ecm:emetrp:v:45:y:1977:i:2:p:427-38}) these functions are also referred to as \emph{separable} or \emph{additively separable}.}
These are the utility functions of the form $\UtilTemplate u{}az=v\left(a\right)-z$ for some function $v\colon\Alternatives\rightarrow\Re$ which is referred to as the \emph{valuation}. We denote the set of all quasi-linear utility functions by $\UtilDomQL$. We note that the quasi-linear \Representations{} of a preference are closed to shifts by a constant and in particular a preference can be \Represented{} by a continuum of quasi-linear utility functions.

\begin{claim}\label{claim:RepIsUniqueIFF}
	Let $u,w\colon\Alternatives\times\Re\rightarrow\Re$ be two quasi-linear utility functions. Then the following three statements are equivalent:
\begin{enumerate}
\item Both $u$ and $w$ \Represent{}  the same preference over $\Alternatives\times\Re$.
\item There exists a constant $C\in\Re$ s.t. for any alternative $a\in\Alternatives$ \quad{} $\UtilTemplate u{}a0=\UtilTemplate w{}a0+C$.
\item There exists a constant $C\in\Re$ s.t. for any alternative $a\in\Alternatives$ and $z\in\Re$ \quad{} $\UtilTemplate u{}az=\UtilTemplate w{}az+C$.
\end{enumerate}
\end{claim}

In a recent work, Ma el al. defined a new class of utility function extending the quasi-linear family, which they named \emph{Parallel utility functions}.
\begin{definition}[Parallel utility functions~\cite{Ma2018}]~\par\nopagebreak\ignorespaces%
\def\UtilOfWorstAlt{\min_{\xPrime a\in\Alternatives}\UtilTemplate u{}{\xPrime a}0}%
 A utility function $u\colon\Alternatives\times\Re\rightarrow\Re$ is a parallel utility function if
	\textbullet{} $\UtilTemplate u{}az$ is continuous and strictly decreasing in $z$,
	\textbullet{} For any alternative $a\in\Alternatives$, $\lim_{z\rightarrow\infty}\UtilTemplate u{}az<\UtilOfWorstAlt$, i.e., an agent prefers her worst alternative for free over paying a large enough payment for $a$, and
	\textbullet{} For any two alternatives $a,b\in\Alternatives$ s.t. $\UtilTemplate u{}a0\geqslant\UtilTemplate u{}b0$ and any payment $z\geqslant0$ s.t. $\UtilTemplate u{}bz\geqslant\UtilOfWorstAlt$ (i.e., any positive payment $z$ s.t. the agent prefers paying $z$ for $b$ over receiving some alternative for free)
		\[ \UtilTemplate u{}a{z+\left(\WTP{a}-\WTP{b}\right)}=\UtilTemplate u{}bz\text{,} \]
	for $\WTP{a}$ being the payment $z$ s.t. $\UtilTemplate u{}az=\UtilOfWorstAlt$, i.e., $\WTP{a}$ is the maximal payment s.t. the agent prefers paying it for $a$ over receiving for free her worst alternative, and $\WTP{b}$ is defined similarly.%
	\def\UtilOfWorstAlt{\min_{\xPrime a\in\Alternatives}\UtilTemplate u{}{\xPrime a}0\undefined}%
\end{definition}

\begin{exampleParallelDomain}
Let $u\in\UtilDomQL$ be a quasi linear utility function, i.e., $\UtilTemplate u{}az=v\left(a\right)-z$ for some valuation function $v\colon\Alternatives\rightarrow\Re$.
Then, for any two alternatives $a,b\in\Alternatives$, $\WTP{a}=v\left(a\right)-\min_{\xPrime a\in\Alternatives}v\left(\xPrime a\right)$ and
	\[ \UtilTemplate u{}a{z+\left(\WTP{a}-\WTP{b}\right)}
		\,\,\,\,= v\left(a\right)-\left[z+\left(v\left(a\right)-v\left(b\right)\right)\right]
		\,\,\,\,= v\left(b\right)-z
		\,\,\,\,= \UtilTemplate u{}bz\text{,} \]
i.e., $u$ is a parallel utility function.
\end{exampleParallelDomain}

\begin{exampleParallelDomain}
Generalizing the former, let $u\in\UtilDomQL$ be a quasi-linear utility function and consider a utility function $\xPrime u\in\UtilDom$ which coincides with $u$ whenever $\UtilTemplate u{}az\geqslant\min_{\xPrime a\in\Alternatives}\UtilTemplate u{}{\xPrime a}0$ (but still $\xPrime u$ is continuous and downward monotone in the payment). For any two alternatives $a,b\in\Alternatives$ s.t. $\UtilTemplate{\xPrime u}{}a0\geqslant\UtilTemplate{\xPrime u}{}b0$ and any payment $z\geqslant0$ s.t. $\UtilTemplate{\xPrime u}{}bz\geqslant\min_{\xPrime a\in\Alternatives}\UtilTemplate{\xPrime u}{}{\xPrime a}0$, it holds that \textbullet{} $\UtilTemplate u{}a0=\UtilTemplate{\xPrime u}{}a0\geqslant\UtilTemplate{\xPrime u}{}b0=\UtilTemplate u{}b0$, \textbullet{} $\WTP{a}=\UtilTemplate u{}a0-\min_{\xPrime a\in\Alternatives}\UtilTemplate u{}{\xPrime a}0$
and similarly for $b$. So $\WTP{a}\geqslant \WTP{b}$, and hence
	\[ \UtilTemplate{\xPrime u}{}bz\,\,\,\,=\,\,\,\,\UtilTemplate u{}bz\,\,\,\,=\,\,\,\,\UtilTemplate u{}a{z+\left(\WTP{a}-\WTP{b}\right)}\,\,\,\,=\,\,\,\,\UtilTemplate{\xPrime u}{}a{z+\left(\WTP{a}-\WTP{b}\right)}\text{,} \]
i.e., $u$ is a parallel utility function.
\end{exampleParallelDomain}

\begin{exampleParallelDomain}\label{examplr:PosRepButNotParallel}
	On the other hand, consider the following utility function.
		\[ \UtilTemplate u{}az=\begin{cases} z\leqslant-3 & -3-z\\ z\geqslant-3 & -1-\sfrac z3 \end{cases}
		\quad;\quad \UtilTemplate u{}bz=-2-z \quad;\quad
		\UtilTemplate u{}cz=\begin{cases} z\leqslant-1 & -1-z\\ z\geqslant-1 & -3-3z\text{.} \end{cases}\]
Then, $u$ coincides with the quasi-linear utility function $\left\{ \begin{array}{l}
	\UtilTemplate{u^{QL}}{}az=-3-z\\
	\UtilTemplate{u^{QL}}{}bz=-2-z\\
	\UtilTemplate{u^{QL}}{}cz=-1-z
\end{array}\right.$ whenever $\UtilTemplate{u^{QL}}{}xz\geqslant0=3+\min_{\xPrime a\in\Alternatives}\UtilTemplate{u^{QL}}{}{\xPrime a}0$ but it is not a parallel utility function. One could see that by noticing that $\WTP{a}=6$, $\WTP{b}=1$, and $\WTP{c}=0$, but for $z=0$:
	\[\begin{array}{l}
	\UtilTemplate u{}a{z+\left(\WTP{a}-\WTP{b}\right)}=\UtilTemplate u{}a5=-\sfrac 83\\
	\UtilTemplate u{}bz=\UtilTemplate u{}b0=-2\text{.}
	\end{array}\]
\end{exampleParallelDomain}

Given type domains $\TypeDomI 1,\ldots,\TypeDomI n\subseteq\TypeDom$ for the agents, a \emph{direct mechanism with monetary transfers} (or shortly a \emph{mechanism}) for $\times_{i\in\Agents}\TypeDomI i$ is defined by two functions $x\colon\times_{i\in\Agents}\TypeDomI i\rightarrow\Alternatives$ and $p\colon\times_{i\in\Agents}\TypeDomI i\rightarrow\Re^{\Agents}$ (and we notate by $p_{i}$ the \Nth{i} coordinate of $p$) in the following way: At the first stage each agent reports her type, and based on the reports $t_{1},\ldots,t_{n}$, the alternative $\xOf{t_{1},\ldots,t_{n}}$ is chosen and Agent~$i$ pays $\piOf i{t_{1},\ldots,t_{n}}$. We say that a mechanism $\MECH xp $ is \emph{onto} if the function $x$ is onto, i.e., any alternative is the outcome of some report vector.\footnote{In the social choice and economics literature~\cite{Aleskerov200295,moulin1995cooperative}, this property is also referred to as \emph{Citizen sovereignty} and as \emph{Non-imposition}.}
We use the notation $\TypeDomI{-i}$ for $\times_{j\neq i}\TypeDomI j$, and say that a mechanism $\MECH xp $ is \emph{incentive compatible} if it is in an agent best interest to always report truthfully her type. Formally, for any type $t_{i}\in\TypeDomI i$ and any sequence of reports $t_{-i}\in\TypeDomI{-i}$,
	\[ t_{i}\in\argmax_{\xPrime t_{i}\in\TypeDomI i}\OUTCOME {\xOf{\xPrime t_{i},t_{-i}}}{\piOf i{\xPrime t_{i},t_{-i}}} \text{,} \]
where the maximum is taken according to the preference $t_{i}$.

An interesting class of incentive-compatible mechanisms are the \emph{affine VCG mechanisms} which get as reports utility functions and not general preferences.
\begin{definition}[Affine VCG mechanism]~\par\nopagebreak\ignorespaces%
 We say that a mechanism $\MECH xp $ is an \emph{affine VCG mechanism} if there exist
	an agent weight vector $w\in\Delta\left(\Agents\right)$ (i.e., $w\in\left[0,1\right]^{\Agents}$ and $\sum_{i\in\Agents}w_{i}=1$),
	an alternative cost vector $c\in\Re^{\Alternatives}$,
	and a non-empty set of alternatives $\xPrime{\Alternatives}\subseteq\Alternatives$
s.t. for any report vector $\left(u_{1},\ldots,u_{n}\right)\in\times_{i\in\Agents}\UtilDomI i$,
	\[ \xOf{u_{1},\ldots,u_{n}}\in\argmax_{a\in\xPrime{\Alternatives}}\left(c_{a}+\sum_{i\in\Agents}w_{i}\cdot\UtilTemplate uia0\right) \]
and there exist functions $h_{i}\colon\UtilDomI{-i}\rightarrow\Re$ for $i=1,\ldots,n$ s.t. for any report vector $\left(u_{1},\ldots,u_{n}\right)$ the payment of Agent~$i$ is
	\[ \piOf i{u_{i},u_{-i}}=h_{i}\left(u_{-i}\right)-\sfrac 1{\AgentWeight i}\cdot\left(\sum_{j\neq i}w_{j}\cdot\UtilTemplate uj{a^{\star}}0+c_{a^{\star}}\right)\qquad\text{for }a^{\star}=\xOf{u_{i},u_{-i}} \]
if $\AgentWeight i>0$ and $\piOf i{u_{i},u_{-i}}=h_{i}\left(u_{-i}\right)$ if $\AgentWeight i=0$ (i.e., if Agent~$i$ has no influence of the chosen alternative $\xOf{u_{i},u_{-i}}$).
\end{definition}

In their works, Vickrey~\cite{Vickrey1960}, Clarke~\cite{clarke1971}, and Groves~\cite{groves1973} proved that when for all agents $\UtilDomI i\subseteq\UtilDomQL$, i.e., all agents hold quasi-linear utility functions, these mechanisms are incentive-compatible. In particular, an agent is indifferent between reporting different representations of her type. Furthermore, recalling that two (quasi-linear) utility functions represent the same preference iff they differ by a constant, we note that $x$ and $p_{i}$ are actually invariant to the representation an Agent~$i$ reports.
Later, Green and Laffont~\cite{RePEc:ecm:emetrp:v:45:y:1977:i:2:p:427-38} and Roberts~\cite{Roberts1979} proved that if for all agents $\UtilDomI i=\UtilDomQL$, these are the only direct onto incentive-compatible mechanisms.
\begin{theorem*}[{Roberts~\cite[Thm.~3.1]{Roberts1979} \& Green and Laffont~\cite[Thm.~3]{RePEc:ecm:emetrp:v:45:y:1977:i:2:p:427-38}}]~\par\nopagebreak\ignorespaces%
Let $\MECH xp $ be an onto incentive-compatible mechanism for the case $\left[\forall i\quad\UtilDomI i=\UtilDomQL\right]$. Then, it is an affine VCG mechanism.
\end{theorem*}

\section{Main Result}
In this work, we introduce a weaker notion of \Representation{} we call positive-representation.
\begin{definition}[Positive-representation]~\par\nopagebreak\ignorespaces%
We say a utility function $u\colon\Alternatives\times\Re\rightarrow\Re$ \emph{\POSRepresents{} } a preference $\prefOver$ if for any two alternatives $a,b\in\Alternatives$ and payments $z_{a},z_{b}\in\Re$ s.t. either $\UtilTemplate u{}a{z_{a}}\geqslant0$ or $\UtilTemplate u{}b{z_{b}}\geqslant0$,
	\[ \MECH a{z_{a}}\prefOver\MECH b{z_{b}}\quad\iff\quad\UtilTemplate u{}a{z_{a}}\geqslant\UtilTemplate u{}b{z_{b}}\text{.} \]
\end{definition}

That is, we require $u$ to \Represent{} the preference only above some threshold (The choice of zero as the threshold is for normalization). E.g., in Example~\ref{examplr:PosRepButNotParallel} we saw a preference which coincides with a quasi-linear utility function $u^{QL}$ whenever the utility is positive (but the threshold was not $\min_{a\in\Alternatives}\UtilTemplate{u^{QL}}{}a0$ and the utility was not a parallel utility). One could think of this threshold as representing an outside option so we are not interested in modeling the agent's preferences between two unacceptable alternatives.
This allows one to \Represent{} the agent using a simpler utility function (quasi-linear in our case). For example, a common practice is to model a buyer in a combinatorial auction using a valuation function $v\colon2^{X}\rightarrow\Re$ which assigns a monetary value to each subset of items, and a quasi-linear utility function $\UtilTemplate u{}Sz =v\left(S\right)-z$ for $S\subseteq X$ and $z\in\Re$. In particular, such \Representation{} entails a preference between two outcomes in which the buyer over-paid for the bundle she received. But in most cases, it is assumed a buyer can always refuse to over-pay, and it might be a better modeling to assume indifference. Moreover, commonly this preference pays no role in the analysis, so the analysis should be applicable also if the buyer has a different preference between such outcomes.

In this work we are interested in agents that can be \POSRepresented{} by quasi-linear utility functions, and show that the characterization of incentive-compatible mechanisms is naturally extended for these preferences.

It is straight-forward from the definition that if a utility function $u$ \Represents{} a preference then in particular $u$ \POSRepresents{} the preference. Moreover, as a corollary of Claim~\ref{claim:RepIsUniqueIFF} we get the following.
\begin{corollary}
Let $u\in\UtilDomQL$ be a quasi-linear utility function and $\prefOver$ a preference s.t. $u$ \Represents{} $\prefOver$. Then, for any constant $C\in\Re$ the utility function $u+C$ \POSRepresents{} $\prefOver$.
\end{corollary}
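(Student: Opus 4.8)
The plan is to chain three observations: that $u+C$ is again quasi-linear, that Claim~\ref{claim:RepIsUniqueIFF} then forces $u+C$ to \Represent{} the same preference that $u$ does, and that full \Representation{} is a fortiori a \POSRepresentation{}. So rather than verifying the \POSRepresentation{} inequalities directly, I would obtain the (stronger) fact that $u+C$ outright \Represents{} $\prefOver$ and then weaken it.

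First I would check that $u+C\in\UtilDomQL$. Writing $\UtilTemplate u{}az=v\left(a\right)-z$ for the valuation $v\colon\Alternatives\rightarrow\Re$ witnessing $u\in\UtilDomQL$, we have $\UtilTemplate{\left(u+C\right)}{}az=\left(v\left(a\right)+C\right)-z$, so $u+C$ is quasi-linear with valuation $v+C$; being a constant shift of $u$, it is moreover still strictly decreasing in the payment. Next, since $u$ and $u+C$ are two quasi-linear utility functions with $\UtilTemplate{\left(u+C\right)}{}az=\UtilTemplate u{}az+C$ for all $a$ and $z$, statement~3 of Claim~\ref{claim:RepIsUniqueIFF} holds with this very constant $C$. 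The equivalence (3)~$\Rightarrow$~(1) of that claim then gives that $u$ and $u+C$ \Represent{} the same preference; as $u$ \Represents{} $\prefOver$ by hypothesis, we conclude that $u+C$ \Represents{} $\prefOver$.

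Finally I would invoke the remark stated just before the corollary, namely that a utility function which \Represents{} a preference in particular \POSRepresents{} it; applied to $u+C$ this yields the claim. There is essentially no obstacle here beyond bookkeeping: the only point that genuinely needs attention is confirming that $u+C$ has not left the quasi-linear class, so that Claim~\ref{claim:RepIsUniqueIFF} is applicable. It is worth noting that this argument in fact establishes the stronger statement that $u+C$ \Represents{} $\prefOver$; the weaker \POSRepresentation{} phrasing is adopted only to keep the statement uniform with the \POSRepresentation{}-based development that follows.
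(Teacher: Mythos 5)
Your proposal is correct and takes essentially the same route as the paper: the paper derives this corollary precisely from Claim~\ref{claim:RepIsUniqueIFF} (constant shifts of quasi-linear utilities \Represent{} the same preference) combined with the immediately preceding remark that \Representation{} implies \POSRepresentation{}. Your closing observation that the argument actually yields the stronger conclusion that $u+C$ \Represents{} $\prefOver$ is also accurate and consistent with the paper's framing.
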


Note that only a weaker version of Claim~\ref{claim:RepIsUniqueIFF} holds for \POSRepresentations{}. One can find two quasi-linear utility functions $u,v\in\UtilDomQL$ which differ by a constant but $u$ \POSRepresents{} a preference which $v$ does not. For instance, take $\Alternatives=\left\{ a,b,c\right\} $ and the preference which is defined by the following utility function $w\colon\Alternatives\times\Re\rightarrow\Re$:
	\[ \UtilTemplate w{}az=1-z\quad;\quad\UtilTemplate w{}bz=\begin{cases}
		z\leqslant2 & 2-z\\
		z\geqslant2 & 4-2z
	\end{cases}\quad;\quad\UtilTemplate w{}cz=\begin{cases}
		z\leqslant3 & 3-z\\
		z\geqslant3 & 9-3z\text{.}
	\end{cases} \]

\[\begin{tikzpicture}[domain=-1:5.5,yscale=.4]
	\draw[->] (0,-7.5) -- (0,4);%
	\draw[<->] (-1,0) -- (5.5,0) node[below] {$z$};%
	\foreach \x in {0,1,2,3}
		\draw[shift={(\x,0)},color=black] (0pt,3pt) -- (0pt,-3pt) node[above] {$\x$};
	\draw[black,thick,-] (-1,2) -- (1,0) -- (5.5,-4.5) node[right]{$\UtilTemplate w{}az$};%
	\draw[green,thick,-] (-1,3) -- (2,0) -- (5.5,-7)   node[right]{$\UtilTemplate w{}bz$};%
	\draw[red,thick,-]   (-1,4) -- (3,0) -- (5.5,-7.5) node[right]{$\UtilTemplate w{}cz$};%
\end{tikzpicture}\]

Then, this preference is \POSRepresented{} by the quasi-linear utility function $\left\{ \begin{array}{l}
		\UtilTemplate u{}az=1-z\\
		\UtilTemplate u{}bz=2-z\\
		\UtilTemplate u{}cz=3-z
\end{array}\right.$ but not by $\left\{ \begin{array}{l}
	\UtilTemplate{\xPrime u}{}az=7-z\\
	\UtilTemplate{\xPrime u}{}bz=8-z\\
	\UtilTemplate{\xPrime u}{}cz=9-z
\end{array}\right.$. For instance $\left\{ \begin{array}{l}
	\UtilTemplate{\xPrime u}{}a6=1\\
	\UtilTemplate{\xPrime u}{}b6=2\\
	\UtilTemplate{\xPrime u}{}c6=3
\end{array}\right.$ but $\left\{ \begin{array}{l}
	\UtilTemplate w{}a6=-5\\
	\UtilTemplate w{}b6=-8\\
	\UtilTemplate w{}c6=-9
\end{array}\right.$.
\begin{claim}\label{claim:IfTwoPosRepThenConstantDiff}
	Let $u,w\colon\Alternatives\times\Re\rightarrow\Re$ be two quasi-linear utility functions.
\begin{enumerate}
\item If both $u$ and $w$ \POSRepresent{}  the same preference, then there exists a constant $C\in\Re$ s.t.
\[
\forall a\in\Alternatives,\,z\in\Re\quad\UtilTemplate w{}az=\UtilTemplate u{}az+C\text{.}
\]
\item If there exists a positive constant $C\geqslant0$ s.t.
\[
\forall a\in\Alternatives,\,z\in\Re\quad\UtilTemplate w{}az=\UtilTemplate u{}az-C\text{,}
\]
then $w$ \POSRepresents{}  any preference which is \POSRepresented{} by $u$.
\end{enumerate}
\end{claim}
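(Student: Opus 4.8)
The plan is to use quasi-linearity to write $\UtilTemplate u{}az=v(a)-z$ and $\UtilTemplate w{}az=\xPrime v(a)-z$ for valuations $v,\xPrime v\colon\Alternatives\to\Re$, and then to route every preference comparison through outcomes on which the relevant utility is positive, so that the pos-representation hypotheses actually apply. The one structural fact I will lean on repeatedly is that a quasi-linear utility is strictly decreasing in the payment and tends to $+\infty$ as $z\to-\infty$; hence for any alternative I can choose a payment that makes its utility as large and positive as I wish, simultaneously for $u$ and for $w$.

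For the first item, I would reduce the claim to the pairwise statement that $v(a)-v(b)=\xPrime v(a)-\xPrime v(b)$ for every $a,b\in\Alternatives$: once this holds, the difference $\xPrime v(a)-v(a)$ is independent of $a$, and calling it $C$ gives $\UtilTemplate w{}az=\UtilTemplate u{}az+C$. To establish the pairwise equality, fix $a,b$ and pick $z_a$ negative enough that both $\UtilTemplate u{}a{z_a}>0$ and $\UtilTemplate w{}a{z_a}>0$. Setting $z_b:=z_a-\left(v(a)-v(b)\right)$ makes $\UtilTemplate u{}a{z_a}=\UtilTemplate u{}b{z_b}$; since $\UtilTemplate u{}a{z_a}>0$, the pos-representation by $u$ certifies the indifference $\MECH a{z_a}\Ind\MECH b{z_b}$ (the threshold condition $\UtilTemplate u{}a{z_a}\geqslant0$ holds for both directions of the comparison). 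Because $\UtilTemplate w{}a{z_a}>0$ as well, the pos-representation by $w$ applied to this same indifference forces $\UtilTemplate w{}a{z_a}=\UtilTemplate w{}b{z_b}$. Reading off the two equalities gives $z_a-z_b=v(a)-v(b)$ and $z_a-z_b=\xPrime v(a)-\xPrime v(b)$, which is exactly the desired pairwise equality.

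For the second item, I would note that $\UtilTemplate w{}az=\UtilTemplate u{}az-C$ with $C\geqslant0$ means the region where $w$ is nonnegative is contained in the region where $u$ is nonnegative, since $\UtilTemplate w{}az\geqslant0$ implies $\UtilTemplate u{}az=\UtilTemplate w{}az+C\geqslant0$. So, given that $u$ pos-represents $\prefOver$ and given any $a,b,z_a,z_b$ with $\UtilTemplate w{}a{z_a}\geqslant0$ or $\UtilTemplate w{}b{z_b}\geqslant0$, the analogous condition holds for $u$, the pos-representation by $u$ applies, and $\MECH a{z_a}\prefOver\MECH b{z_b}\iff\UtilTemplate u{}a{z_a}\geqslant\UtilTemplate u{}b{z_b}$. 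Subtracting the constant $C$ from both sides of the inner inequality leaves it equivalent to $\UtilTemplate w{}a{z_a}\geqslant\UtilTemplate w{}b{z_b}$, which is precisely what pos-representation of $\prefOver$ by $w$ demands; note that this direction uses only $\UtilTemplate w{}az=\UtilTemplate u{}az-C$ and $C\geqslant0$, not quasi-linearity.

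The only delicate point is in the first item: pos-representation says nothing when both comparands have negative utility, so an indifference certified through $u$ cannot in general be transported to $w$. I expect this to be the main obstacle, and the remedy is the simultaneity trick above—anchoring every comparison at a single outcome $\MECH a{z_a}$ chosen (via the unboundedness of quasi-linear utilities in the negative-payment direction) to be positive under $u$ and $w$ at once, so that both pos-representations fire on the same pair. The second item needs no such care, because subtracting a nonnegative constant can only shrink the positivity region and thus keeps us inside the domain where $u$ already governs the preference.
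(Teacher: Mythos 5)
Your proof is correct and takes essentially the same approach as the paper: the paper's argument likewise anchors each comparison at a payment $z_{a}\leqslant\min\left(u\left(a,0\right),w\left(a,0\right)\right)$, i.e., exactly your simultaneity trick ensuring both pos-representations fire on the same pair, and merely equates the valuation differences by varying $z_{b}$ over all of $\Re$ rather than transporting a single $u$-indifference as you do. Your second item coincides with the paper's proof step for step, including the observation that $C\geqslant0$ shrinks the positivity region of $w$ into that of $u$.
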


\subsection*{The mechanism}

Given a non-empty set of alternatives $\xPrime{\Alternatives}\subseteq\Alternatives$, an agent weight vector $w\in\Delta\left(\Agents\right)$, and an alternative cost vector $c\in\Re^{\Alternatives}$, we define the following mechanism for the scenario that for all $i\in\Agents$, any type of Agent~$i$, $t\in\TypeDomI i$, is \POSRepresented{} by (at least one) quasi-linear utility function $u\in\UtilDomQL$.
\begin{mechanism}\label{MyMECHPosRep}
	\renewcommand{\SWITCH}[2]{#1}{Given a vector of reports $\left(t_{1},\ldots,t_{n}\right)$ for $t_{i}\in\protect\TypeDomI i$,\nopagebreak%
\begin{itemize}
\item[$\blacktriangleright$]  Define $u_{i}$ to be an arbitrary quasi-linear utility function which \SRepresents{} $t_{i}$.
\item[$\blacktriangleright$]
Choose $a^{\star}
	\in\protect\argmax_{a\in\protect\xPrime{\protect\Alternatives}}
	\left(\protect\AltCost a
		+\sum_{i\in\protect\Agents}
			\protect\AgentWeight i\cdot\protect\UtilTemplate uia0\right)$
and return
\begin{itemize}
\item The chosen alternative: $\protect\xOf{t_{1},\ldots,t_{n}}=a^{\star}$.
\item The payment of Agent~$i$: If $\protect\AgentWeight i>0$,
\[
\protect\piOf i{t_{1},\ldots,t_{n}}=h_{i}\left(t_{-i}\right)-\protect\sfrac 1{\protect\AgentWeight i}\cdot\left(\protect\AltCost{a^{\star}}+\sum_{j\protect\neq i}\protect\AgentWeight j\cdot\protect\UtilTemplate uj{a^{\star}}0\right)\text{,}
\]
for some function $h_{i}$ which depends only on the reports of the
other agents, and in case $\protect\AgentWeight i=0$, $\protect\piOf i{t_{1},\ldots,t_{n}}=h_{i}\left(t_{-i}\right)$.
\end{itemize}
\end{itemize}\renewcommand{\SWITCH}[2]{\undefined}}
\end{mechanism}
There is no constraint on the choice of the \POSRepresentations{} $u_{i}$ in the first step of the mechanism. Nevertheless, we note that, similarly to our note for the VCG mechanism, the choice of \POSRepresentation{} for Agent~$i$ does not influence the chosen alternative $x$ or her payment $p_{i}$. Based on Claim~\ref{claim:IfTwoPosRepThenConstantDiff}, the set $\argmax_{a\in\xPrime{\Alternatives}}\left(\AltCost a+\sum_{i\in\Agents}\AgentWeight i\cdot\UtilTemplate uia0\right)$ does not depend on the chosen \POSRepresentations{}, and the payment of Agent~$i$ is influenced only by the choice of \POSRepresentations{} for other agents. Hence, one can derive a mechanism with a more natural input language which gets (a compact representation of) some \POSRepresentation{} of the agent's preference. We show that Mechanism~\ref{MyMECHPosRep} is incentive-compatible under an additional bound on the payment functions $h_{i}$.
\begin{theorem}\label{thm:MyVCG}~\par\nopagebreak\ignorespaces%
	Let
	$\emptyset\neq\xPrime{\Alternatives}\subseteq\Alternatives$,
	$w\in\Delta\left(\Agents\right)$ be an agent weight vector,
	$c\in\Re^{\Alternatives}$ an alternative cost vector,
	and $\MECH xp$ Mechanism~\ref{MyMECHPosRep} defined by $\left\langle \xPrime A,w,c \right\rangle $.
Then $\MECH xp$ is an incentive-compatible mechanism whenever the following hold:
\begin{itemize}
\item For all $i\in\Agents$, any type of Agent~$i$ is \POSRepresented{} by a quasi-linear utility function $u\in\UtilDomQL$.
\item For all $i\in\Agents$, the function $h_{-i}$ in the definition of Mechanism~\ref{MyMECHPosRep} satisfies that for any profile of types $t_{1}\in\TypeDomI 1,\ldots,t_{n}\in\TypeDomI n$, there exist quasi-linear utility functions $u_{1},\ldots,u_{n}\in\UtilDomQL$ s.t. $u_{i}$ \POSRepresents{}  $t_{i}$ and
\[
h_{i}\left(t_{-i}\right)\leqslant\sfrac 1{\AgentWeight i}\cdot\max_{a\in\xPrime{\Alternatives}}\left[\AltCost a+\sum_{j}\AgentWeight j\cdot\UtilTemplate uja0\right]\text{.}
\]
\end{itemize}
\end{theorem}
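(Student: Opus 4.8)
The plan is to reduce to the classical Groves argument, invoking the positive-representation property only at the final step. Fix an agent $i\in\Agents$, her true type $t_i$, and an arbitrary profile $t_{-i}\in\TypeDomI{-i}$ of the other agents' reports. If $\AgentWeight i=0$ then neither the chosen alternative nor the payment $\piOf i{\cdot}=h_i(t_{-i})$ depends on agent $i$'s report, so truthfulness is vacuously optimal; assume henceforth $\AgentWeight i>0$. By the second hypothesis, pick quasi-linear utilities $u_1,\ldots,u_n\in\UtilDomQL$ with each $u_j$ pos-representing $t_j$ and with this profile witnessing the stated bound on $h_i(t_{-i})$; by Claim~\ref{claim:IfTwoPosRepThenConstantDiff} the argmax defining the chosen alternative is insensitive to which pos-representations are used, so I may evaluate the mechanism with exactly these $u_j$, holding the surrogates $u_j$ for $j\neq i$ fixed across all of agent $i$'s possible reports.

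First I would run the Groves telescoping. Writing $a^\star=\xOf{t_i,t_{-i}}$ for the truthful outcome and substituting the payment formula, the agent-$i$ weight term recombines into the full objective, so $\UtilTemplate u i{a^\star}{\piOf i{t_i,t_{-i}}}$ equals $-h_i(t_{-i})+\sfrac 1{\AgentWeight i}\bigl(\AltCost{a^\star}+\sum_j\AgentWeight j\UtilTemplate u j{a^\star}0\bigr)$. For any deviation $\xPrime t_{i}$, let $\xPrime a=\xOf{\xPrime t_{i},t_{-i}}$; since $h_i(t_{-i})$ and the surrogates $u_j$ ($j\neq i$) are unchanged and the payment depends on the report only through the chosen alternative, the same computation gives $\UtilTemplate u i{\xPrime a}{\piOf i{\xPrime t_{i},t_{-i}}}=-h_i(t_{-i})+\sfrac 1{\AgentWeight i}\bigl(\AltCost{\xPrime a}+\sum_j\AgentWeight j\UtilTemplate u j{\xPrime a}0\bigr)$. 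Because $a^\star\in\argmax_{a\in\xPrime\Alternatives}\bigl(\AltCost a+\sum_j\AgentWeight j\UtilTemplate u ja0\bigr)$ while $\xPrime a\in\xPrime\Alternatives$, the truthful affine value dominates, yielding $\UtilTemplate u i{a^\star}{\piOf i{t_i,t_{-i}}}\geqslant\UtilTemplate u i{\xPrime a}{\piOf i{\xPrime t_{i},t_{-i}}}$.

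The remaining step is to transport this surrogate inequality to the agent's actual preference $t_i$, and this is the sole place where the positive-representation subtlety matters, since $u_i$ reproduces $t_i$ only when one of the two compared utilities is non-negative. Here the bound on $h_i$ does exactly the needed work: it states $h_i(t_{-i})\leqslant\sfrac 1{\AgentWeight i}\max_{a\in\xPrime\Alternatives}\bigl(\AltCost a+\sum_j\AgentWeight j\UtilTemplate u ja0\bigr)$, which is precisely the assertion that the truthful value satisfies $\UtilTemplate u i{a^\star}{\piOf i{t_i,t_{-i}}}\geqslant0$. Thus at least one comparand of the pair $\OUTCOME{a^\star}{\piOf i{t_i,t_{-i}}}$, $\OUTCOME{\xPrime a}{\piOf i{\xPrime t_{i},t_{-i}}}$ has non-negative $u_i$-value, so the definition of pos-representation upgrades the surrogate inequality into $\OUTCOME{a^\star}{\piOf i{t_i,t_{-i}}}\prefOver\OUTCOME{\xPrime a}{\piOf i{\xPrime t_{i},t_{-i}}}$ under $t_i$. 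Since the deviation was arbitrary, truthful reporting is a best response, which is incentive-compatibility.

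I expect the one genuine obstacle to be bookkeeping rather than insight: one must keep straight that the argmax selecting the deviation alternative $\xPrime a$ uses agent $i$'s \emph{reported} surrogate whereas her experienced utility is measured by her \emph{true} surrogate $u_i$, and that the payment's dependence on the deviating report is mediated entirely by the chosen alternative, so that both values can be read off the single affine objective assembled from $u_i$. Once this is in place, the core is the textbook Groves cancellation and all the conceptual content is isolated in the final pos-representation step, matching the paper's stated aim of obtaining the result by a simple reduction.
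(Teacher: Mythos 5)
Your proof is correct and follows essentially the same route as the paper's own: the case $\AgentWeight i=0$ dispatched trivially, the Groves cancellation showing that truthful reporting maximizes the surrogate quasi-linear utility $u_{i}$, Claim~\ref{claim:IfTwoPosRepThenConstantDiff} for invariance of the argmax under the choice of \POSRepresentations{}, and finally the bound on $h_{i}$ to guarantee that the truthful outcome has non-negative $u_{i}$-value, so that \POSRepresentation{} transports the surrogate inequality to the true preference $t_{i}$. If anything, you make the last step slightly more explicit than the paper (observing that the definition of \POSRepresentation{} needs only \emph{one} of the two compared outcomes to have non-negative utility, namely the truthful one), but the decomposition and key ingredients are identical.
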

Before proving the theorem, it is worthwhile to understand the bound on $h_{i}$ for the following two scenarios:
If all types of all agents are \Represented{} by some quasi-linear utility function,
then for any function $h_{i}$ we can find \POSRepresentations{} of the types s.t. the bound will hold (since for any constant $C>0$  $\left(u+C\right)$ \Represents{}, and hence also \POSRepresents{}, the same preference as $u$).
If all types of all agents are \POSRepresented{} by some quasi-linear utility function $u_{i}\in\UtilDomQL$ which satisfies $\min_{a\in\xPrime{\Alternatives}}\UtilTemplate uia0\geqslant0$,
then the \emph{Clarke pivot rule}~\cite[Def.~9.19]{AGTbookCh9}
	\[ h_{i}\left(t_{-i}\right)=\sfrac 1{\AgentWeight i}\cdot\max_{a\in\xPrime{\Alternatives}}\left(\AltCost a+\sum_{j\neq i}\AgentWeight j\cdot\UtilTemplate uja0\right) \]
satisfies this bound.
Last, we note that such function might not exist when the type sets are too rich. E.g., when $\AltCost{a}\equiv0$, $\AgentWeight{i}\equiv1$, and
	$\TypeDomI i = \left\{u_{\alpha}\right\}_{\alpha\in\Re}$ for $v\left(x\right)=\begin{cases} x=a &1\\ x=b &2\\ x=c &3\end{cases}$ and
	$\UtilTemplate{u}{\alpha}{x}{z}=\begin{cases}z\leqslant v\left(x\right)-\alpha & v\left(x\right)-\alpha-z \\ \OW & \left(v\left(x\right)-\alpha-z\right)^3\text, \end{cases}$
 $u_{\alpha}$ (for large enough $\alpha$) is \POSRepresented{} only by quasi-linear utility functions satisfying $\UtilTemplate{u^{QL}}{}{x}{0}\leqslant -\alpha$ for all alternatives $x\in\Alternatives$, so the bound is not feasible,
\makeatletter\@ifundefined{r@thm:MyVCG}{\def\REF{ERROR}}{\def\REF{\ref{thm:MyVCG}}}\makeatother%
\spnewtheorem*{proofOfThmVCG}{Proof of Theorem~\REF}{\itshape}{\rmfamily}%{\bfseries}{\itshape}%
\begin{proofOfThmVCG}\ThmFootnote{Our proof follows the steps of the incentive-compatibility proof of the VCG mechanism of Nisan~\cite[Prop.~9.31]{AGTbookCh9} with few modifications.}~\par\nopagebreak\ignorespaces%
	We will show that Agent~$i$ maximizes her preference by reporting truthfully. First, we notice that if $\AgentWeight i=0$ this claim is trivial since the agent has no influence on the chosen alternative or on her payment. Henceforth, we assume that $\AgentWeight i>0$.

Let $t_{i}\in\TypeDomI i$ be the preference of Agent~$i$, and let $t_{-i}\in\times_{j\neq i}\TypeDomI j$ be the reports of the other agents. Next, for $j=1,\ldots,n$, let $u_{j}$ be the quasi-linear \POSRepresentation{} of $t_{j}$ which was chosen by the mechanism.
First, we show that Agent~$i$ maximizes the utility $u_{i}$ by reporting truthfully. If an alternative~$a$ is chosen, the utility is
\[
\UtilTemplate uia0-h_{i}\left(t_{-i}\right)+\sfrac 1{\AgentWeight i}\cdot\left(\AltCost a+\sum_{j\neq i}\AgentWeight j\cdot\UtilTemplate uja0\right)\text{,}
\]
 and this expression is maximized when
\[
\AltCost a+\sum_{j}\AgentWeight j\cdot\UtilTemplate uja0
\]
is maximized which is what happens when Agent~$i$ reports truthfully.

Based on Claim~\ref{claim:IfTwoPosRepThenConstantDiff}, the set of maximizers,
	$\argmax_{a\in\xPrime{\Alternatives}}
		\left(\AltCost a+
			\sum_{j\in\Agents}\
			\AgentWeight j\cdot\UtilTemplate uja0
		\right)$,
is invariant to the choice of \POSRepresentations{} of the agents' preferences.
%
%Since $h_{i}$ satisfies
%	\[ h_{i}\left(t_{-i}\right)\leqslant\sfrac 1{\AgentWeight i}\cdot\max_{a\in\xPrime{\Alternatives}}\left[\AltCost a+\sum_{j}\AgentWeight j\cdot\UtilTemplate uja0\right]\text{,} \]
%for some choice of \POSRepresentations{}, we get that the utility $u_{i}$ (for some \POSRepresentation{}) of the outcome is positive
%
By the assumption on $h_{-i}$, there exist \POSRepresentations{} $u_{1},\ldots,u_{n}\in\UtilDomQL$ of the agents preferences s.t.
	\[
	\UtilTemplate ui{\xOf{t_{1},\ldots,t_{n}}}{\piOf i{t_{1},\ldots,t_{n}}}=
		\sfrac 1{\AgentWeight i}\cdot\left[\AltCost{\xOf{t_{1},\ldots,t_{n}}}+\sum_{j}\AgentWeight j\cdot\UtilTemplate uj{a^{\star}}0\right]-h_{i}\left(t_{-i}\right)\geqslant0\text{,}
	\]
so maximizing $u_{i}$ also maximizes the preference of Agent~$i$.\qed
\end{proofOfThmVCG}
Next, we show that when the type sets $\TypeDomI i$ are rich enough, there is no other onto incentive-compatible mechanisms.
\begin{theorem}\label{thm:MyRoberts}~\par\nopagebreak\ignorespaces%
	If there are at least three alternatives ($\sizeof{\Alternatives}\geqslant3$) and
for any $i\in\Agents$ there exist a bijection $\RobertsPhi i$ between $\TypeDomI i$ and $\left\{u\in\UtilDomQL \SetSt \min_{a\in\Alternatives}\UtilTemplate u{}a0=0\right\}$
	s.t. for any type $t\in\TypeDomI i$, $t$ is \POSRepresented{} by $\RPhiOf it$,
then any incentive-compatible onto mechanism $\MECH xp$ s.t.
	for any profile of types $t_{1}\in\TypeDomI 1,\ldots,t_{n}\in\TypeDomI n$, the payment of the \Nth{i} agent satisfies
	\[ \piOf i{t_{1},\ldots,\TypeDomI n}\leqslant\UtilTemplate {\left[\RPhiOf it\right]}{}{\xOf{t_{1},\ldots,t_{n}}}0\text, \]
can be defined as Mechanism~\ref{MyMECHPosRep} w.r.t. $\xPrime{\Alternatives}=\Alternatives$, an agent weight vector $w\in\Delta\left(\Agents\right)$, and an alternative cost vector $c\in\Re^{\Alternatives}$.
\end{theorem}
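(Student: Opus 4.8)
The plan is to reduce the statement to the classical Roberts and Green--Laffont characterization recalled above, by building from $\MECH xp$ a genuine mechanism on the \emph{full} quasi-linear domain $\times_{i\in\Agents}\UtilDomQL$ and then reading the affine-VCG structure back through the bijections $\RobertsPhi i$.

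First I would set up the dictionary between types and normalized quasi-linear utilities. Writing $\mathcal V=\left\{u\in\UtilDomQL \SetSt \min_{a\in\Alternatives}\UtilTemplate u{}a0=0\right\}$, Claim~\ref{claim:RepIsUniqueIFF} shows that every quasi-linear preference has a unique representative in $\mathcal V$ (subtract the constant $\min_a\UtilTemplate u{}a0$), and by hypothesis each $\RobertsPhi i\colon\TypeDomI i\to\mathcal V$ is a bijection with $t$ \POSRepresented{} by $\RPhiOf it$. I would then define $\MECH{\tilde x}{\tilde p}$ on $\times_{i\in\Agents}\UtilDomQL$ by normalizing each report: given $\left(u_1,\dots,u_n\right)$, set $\bar u_i=u_i-\min_{a}\UtilTemplate uia0\in\mathcal V$, let $t_i=\RobertsPhi i^{-1}\left(\bar u_i\right)$, and put $\tilde x\left(u_1,\dots,u_n\right)=\xOf{t_1,\dots,t_n}$ and $\tilde p_i\left(u_1,\dots,u_n\right)=\piOf i{t_1,\dots,t_n}$. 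By construction $\MECH{\tilde x}{\tilde p}$ depends only on the normalizations, and it is onto because $x$ is: for $a\in\Alternatives$ pick a profile with $\xOf{t_1,\dots,t_n}=a$ and feed in the utilities $\RPhiOf i{t_i}\in\mathcal V\subseteq\UtilDomQL$.

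The hard part will be showing $\MECH{\tilde x}{\tilde p}$ is incentive-compatible on $\times_{i\in\Agents}\UtilDomQL$; this is the only step where the two extra hypotheses are used. Fix an agent $i$ with true valuation $v_i$, normalization $\bar u_i=\RPhiOf i{t_i}$ of valuation $\bar v_i=v_i-C_i$ (where $C_i=\min_a v_i\left(a\right)$), and an arbitrary report with type $t_i'$. Her true utility from that report is $\bar v_i\left(\xOf{t_i',t_{-i}}\right)+C_i-\piOf i{t_i',t_{-i}}$; since $C_i$ is constant in the report, it is equivalent to maximize $\bar u_i\left(\xOf{t_i',t_{-i}},\piOf i{t_i',t_{-i}}\right)$ over $t_i'$. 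The payment bound gives $\piOf i{t_i,t_{-i}}\leqslant\bar u_i\left(\xOf{t_i,t_{-i}},0\right)=\bar v_i\left(\xOf{t_i,t_{-i}}\right)$, so the truthful outcome has non-negative $\bar u_i$-value; hence every comparison of the truthful outcome against a deviation outcome has a comparand of non-negative $\bar u_i$-value, and \POSRepresentation{} of $t_i$ by $\bar u_i$ lets me replace comparisons under the preference $t_i$ by comparisons of $\bar u_i$-values. Combined with the incentive-compatibility of $\MECH xp$ (which says $t_i$ weakly prefers its truthful outcome), this yields $\bar u_i\left(\xOf{t_i,t_{-i}},\piOf i{t_i,t_{-i}}\right)\geqslant\bar u_i\left(\xOf{t_i',t_{-i}},\piOf i{t_i',t_{-i}}\right)$ for every $t_i'$, which is exactly the desired quasi-linear incentive-compatibility.

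Finally I would invoke the Roberts and Green--Laffont characterization: since $\sizeof\Alternatives\geqslant3$ and $\MECH{\tilde x}{\tilde p}$ is an onto incentive-compatible mechanism on the full quasi-linear domain, it is an affine VCG mechanism for some $w\in\Delta\left(\Agents\right)$, $c\in\Re^{\Alternatives}$, and $\xPrime\Alternatives\subseteq\Alternatives$; ontoness of $\tilde x$ together with the fact that an affine VCG rule only ever outputs alternatives of $\xPrime\Alternatives$ forces $\xPrime\Alternatives=\Alternatives$. Reading this back through the dictionary, for a profile $\left(t_1,\dots,t_n\right)$ the normalized \POSRepresentations{} $u_i=\RPhiOf i{t_i}$ satisfy $\xOf{t_1,\dots,t_n}=\tilde x\left(u_1,\dots,u_n\right)\in\argmax_{a\in\Alternatives}\left(\AltCost a+\sum_{i\in\Agents}\AgentWeight i\cdot\UtilTemplate uia0\right)$, and the affine-VCG payment formula for $\tilde p_i$ turns into the payment of Mechanism~\ref{MyMECHPosRep}, with $h_i\left(t_{-i}\right)$ inherited from the VCG offset (which depends only on $u_{-i}$, hence only on $t_{-i}$). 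This exhibits $\MECH xp$ as Mechanism~\ref{MyMECHPosRep} with respect to $\xPrime\Alternatives=\Alternatives$, $w$, and $c$, as required.
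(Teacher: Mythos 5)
Your proposal is correct and takes essentially the same route as the paper's own proof: both construct an auxiliary mechanism on the full quasi-linear domain by normalizing each report into $\left\{u\in\UtilDomQL \SetSt \min_{a\in\Alternatives}\UtilTemplate u{}a0=0\right\}$ and pulling back through the bijections $\RobertsPhi i$, verify ontoness and incentive-compatibility exactly as you do (the payment bound makes the truthful outcome's normalized utility non-negative, so \POSRepresentation{} converts preference comparisons into utility comparisons), and then invoke Roberts and Green--Laffont. The only cosmetic difference is that the paper closes by explicitly citing Claim~\ref{claim:IfTwoPosRepThenConstantDiff} to argue that the argmax set and the functions $h_i$ are invariant to the choice of quasi-linear \POSRepresentations{} (so the mechanism is genuinely an instance of Mechanism~\ref{MyMECHPosRep}, whose first step allows an arbitrary \POSRepresentation{}), whereas you exhibit the form only for the canonical normalized representatives; that last invariance step follows from the same claim.
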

Following the steps of Roberts~\cite[Thm.~3.2]{Roberts1979}, we get as a corollary that without monetary transfers the only incentive-compatible mechanisms are dictatorships.
\begin{corollary}\label{cor:Dictatorships}~\par\nopagebreak\ignorespaces%
If there are at least three alternatives and the type sets $\left\{ \TypeDomI i\right\} _{i\in\Agents}$ satisfy the conditions of Thm.~\ref{thm:MyRoberts},
then for any incentive-compatible onto mechanism $\MECH xp$ without transfers (i.e., $\piOf i{t_{1},\ldots,t_{n}}\equiv0$ for all $i\in\Agents$), there exists a unique agent $d\in\Agents$ (a dictator) s.t. for any type profile $\left(t_{1},\ldots,t_{n}\right)$
	\[ \xOf{t_{1},\ldots,t_{n}}\in\argmax_{a\in\Alternatives}\UtilTemplate uda0\text{.} \]
\end{corollary}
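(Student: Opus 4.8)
The plan is to obtain the corollary as a direct consequence of Theorem~\ref{thm:MyRoberts}. First I would verify that a transfer-free mechanism satisfies the hypotheses of that theorem. The type sets are exactly the ones assumed there, so only the payment bound needs checking. By the normalization built into the bijection $\RobertsPhi i$, the representing utility $\RPhiOf it$ satisfies $\min_{a\in\Alternatives}\UtilTemplate{\left[\RPhiOf it\right]}{}a0=0$, hence $\UtilTemplate{\left[\RPhiOf it\right]}{}{\xOf{t_1,\ldots,t_n}}0\geqslant0$; since $\piOf i{t_1,\ldots,t_n}\equiv0$, the bound $\piOf i{t_1,\ldots,t_n}\leqslant\UtilTemplate{\left[\RPhiOf it\right]}{}{\xOf{t_1,\ldots,t_n}}0$ holds. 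Theorem~\ref{thm:MyRoberts} therefore expresses $\MECH xp$ as Mechanism~\ref{MyMECHPosRep} for some $w\in\Delta\left(\Agents\right)$, $c\in\Re^{\Alternatives}$, and $\xPrime{\Alternatives}=\Alternatives$.

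The core of the argument is to show that $w$ is a point mass. Since $\sum_{i}\AgentWeight i=1$, there is some agent $i$ with $\AgentWeight i>0$; fix such an $i$ and a report profile $t_{-i}$ of the others, writing $F(a)=\AltCost a+\sum_{j\neq i}\AgentWeight j\cdot\UtilTemplate uja0$, where $u_j$ denotes the representation the mechanism selected. Imposing $\piOf i{t_1,\ldots,t_n}\equiv0$ on the payment formula of Mechanism~\ref{MyMECHPosRep} gives $h_i\left(t_{-i}\right)=\sfrac 1{\AgentWeight i}\cdot F\left(\xOf{t_i,t_{-i}}\right)$, and as the left-hand side does not depend on $t_i$, the quantity $F\left(\xOf{t_i,t_{-i}}\right)$ must be the same for every report $t_i$. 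Because $\AgentWeight i>0$ and $\TypeDomI i$ contains a normalized quasi-linear utility for every valuation, agent $i$ can force any alternative to be selected (report a valuation assigning a large enough value to it and $0$ elsewhere, so the weighted objective is strictly maximized there); thus every alternative is realizable as $\xOf{t_i,t_{-i}}$, and $F$ is constant on $\Alternatives$ for every $t_{-i}$.

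Next I would rule out a second positive weight. If some $k\neq i$ had $\AgentWeight k>0$, pick two alternatives $a,b$ (available since $\sizeof{\Alternatives}\geqslant3$); constancy of $F$ forces $\AltCost a-\AltCost b+\sum_{j\neq i}\AgentWeight j\left(\UtilTemplate uja0-\UtilTemplate ujb0\right)=0$ for all $t_{-i}$. But by the richness of $\TypeDomI k$, agent $k$ can change $\UtilTemplate uka0-\UtilTemplate ukb0$ while leaving every other summand fixed, which (as $\AgentWeight k>0$) violates this equality---a contradiction. Hence $i$ is the unique agent with positive weight, so $\AgentWeight i=1$; put $d=i$. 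With $\AgentWeight j=0$ for all $j\neq d$, the function $F$ equals $\AltCost{\cdot}$, which we showed is constant, so $c$ is constant on $\Alternatives$. The selection rule of Mechanism~\ref{MyMECHPosRep} then reads $\xOf{t_1,\ldots,t_n}\in\argmax_{a\in\Alternatives}\left(\AltCost a+\UtilTemplate uda0\right)=\argmax_{a\in\Alternatives}\UtilTemplate uda0$, as required. Uniqueness of $d$ follows because, by richness and $\sizeof{\Alternatives}\geqslant2$, for any candidate second dictator $d'\neq d$ one can make $u_d$ uniquely maximized at $a$ and $u_{d'}$ uniquely maximized at some $b\neq a$, which would force the single outcome to equal both $a$ and $b$.

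The main obstacle I anticipate is the point-mass step. The two delicate points are (i) guaranteeing that an agent with positive weight can force every alternative even in the presence of ties---handled by taking the reported value large enough to make the maximizer strict---and (ii) the ability of one agent to perturb her own valuation difference $\UtilTemplate uka0-\UtilTemplate ukb0$ independently of the other agents' representations, which is exactly what the bijection assumption of Theorem~\ref{thm:MyRoberts} supplies. The remaining computations are routine bookkeeping with the payment formula of Mechanism~\ref{MyMECHPosRep}.
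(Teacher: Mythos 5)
Your proof is correct, and both you and the paper begin the same way: check that the zero-payment mechanism satisfies the payment bound of Theorem~\ref{thm:MyRoberts} (since the normalized \POSRepresentation{} $\RPhiOf it$ has $\min_{a}\UtilTemplate{\left[\RPhiOf it\right]}{}a0=0$, so $0\leqslant\UtilTemplate{\left[\RPhiOf it\right]}{}{\xOf{t_{1},\ldots,t_{n}}}0$), then invoke that theorem to write $\MECH xp$ as Mechanism~\ref{MyMECHPosRep} for some $w\in\Delta\left(\Agents\right)$ and $c\in\Re^{\Alternatives}$. From there, however, your core argument genuinely differs from the paper's. The paper never touches the payment formula: it fixes any agent $d$ with $\AgentWeight d>0$, observes (as you do, via large reported values) that such an agent can enforce any outcome, and then concludes \emph{directly from incentive compatibility with zero transfers} that the realized outcome must lie in $\argmax_{a\in\Alternatives}\UtilTemplate uda0$ --- if it did not, agent $d$ would deviate and enforce a strictly preferred alternative at the same (zero) payment. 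Uniqueness is handled as you handle it. You instead exploit the Groves payment form supplied by Theorem~\ref{thm:MyRoberts}: imposing $\piOf i{t_{1},\ldots,t_{n}}\equiv0$ forces $h_{i}\left(t_{-i}\right)=\sfrac 1{\AgentWeight i}\cdot F\left(\xOf{t_{i},t_{-i}}\right)$, whence $F$ is constant on $\Alternatives$, and the richness of the type sets then kills any second positive weight and forces $c$ to be constant. This buys you strictly more structural information than the corollary asserts --- you prove $w$ is a point mass ($\AgentWeight d=1$) and $c$ is constant, facts the paper's argument bypasses entirely --- at the cost of a longer argument that leans on the payment-side characterization (the form of $p$ with $h_{i}$ depending only on $t_{-i}$), where the paper needs only the allocation-side characterization plus incentive compatibility. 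A small point in your favor: you explicitly verify the payment-bound hypothesis of Theorem~\ref{thm:MyRoberts}, which the paper's proof leaves implicit. Both routes are sound; yours is the more computational, the paper's the more preference-theoretic, following the spirit of Roberts' own derivation of dictatorship.
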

\makeatletter\@ifundefined{r@thm:MyRoberts}{\def\REF{ERROR}}{\def\REF{\ref{thm:MyRoberts}}}\makeatother%
\spnewtheorem*{proofOfThmRoberts}{Proof of Theorem~\REF}{\itshape}{\rmfamily}%{\bfseries}{\itshape}%
\begin{proofOfThmRoberts}~\par\nopagebreak\ignorespaces%
	Let $\MECH xp$ be an incentive-compatible onto mechanism as stated in the theorem. We define the following auxiliary mechanism for the case in which the type sets of all agents are $\UtilDomQL$, i.e., all quasi-linear utility functions.
(Since $\RobertsPhi i$ is a bijection we use the notation $\RobertsPhi i $ for ${\RobertsPhi i}^{-1}$ as well)
\begin{mechanism*}
Given a vector of reports $\left(u_{1},\ldots,u_{n}\right)\in\left(\UtilDomQL\right)^{\Agents}$,
\begin{itemize}
	\item[$\blacktriangleright$] For $i=1,\ldots,n$ \begin{itemize}
		\item Define $\widetilde{u_{i}}\in\UtilDomQL$ by $\UtilTemplate{\widetilde{u_{i}}}{}a0=\UtilTemplate uia0-\min_{\xPrime a\in\Alternatives}\UtilTemplate ui{\xPrime a}0$.
		\item Define $t_{i}\in\TypeDomI i$ to be $\RPhiOf i{\widetilde{u}}\in\TypeDomI i$.
		\end{itemize}
	\item[$\blacktriangleright$] Apply the mechanism $\MECH xp$ on the type vector $\left(t_{1},\ldots,t_{n}\right)$ and return the same.
\end{itemize}
\end{mechanism*}
Then the following two properties hold:
\begin{itemize}
\item \uline{The mechanism is onto}.
	Let $a\in\Alternatives$. Since $\MECH xp$ is an onto mechanism, there exists a type profile $\left(t_{1}\in\TypeDomI 1,\ldots,t_{n}\in\TypeDomI n\right)$ which is mapped by the mechanism $\MECH xp$ to the alternative $a$. For $i=1,\ldots,n$, let $u_{i}$ be a quasi-linear utility function s.t. $\RPhiOf i{u_{i}}=t_{i}$. For this profile, in the first step of the mechanism $u_{i}=\widetilde{u_{i}}$ and it is mapped to $t_{i}$ and hence the chosen alternative is $a$.
\item \uline{The mechanism is incentive compatible}.
	We will show that Agent~$i$ maximizes her utility by reporting truthfully.
	Let $u_{i}\in\UtilDomQL$ be the utility of Agent~$i$, $u_{-i}\in\times_{j\neq i}\UtilDomQL$ be the reports of the other agents, and $\MECH az$ be the outcome of the mechanism when Agent~$i$ reports truthfully, i.e., for the profile $\left(u_{i},u_{-i}\right)$. Next, consider a report  for Agent~$i$, $\widehat{u_{i}}$, and let $\MECH{\widehat{a}}{\widehat{z}}$ be the outcome for the profile $\left(\widehat{u_{i}},u_{-i}\right)$.
	We define  $t_{j}\in\TypeDomI j$ for $j=1,\ldots,n$ and $\widehat{t_{i}}\in\TypeDomI i$ to be the corresponding types in the first step of the mechanism, and note that
	\[ \left\{ \begin{array}{l}
		a=\xOf{t_{i},t_{-i}}\\
		z_{i}=\piOf i{t_{i},t_{-i}}
	\end{array}\right.\quad\&\quad\left\{ \begin{array}{l}
		\widehat{a}=\xOf{\widehat{t_{i}},t_{-i}}\\
		\widehat{z}_{i}=\piOf i{\widehat{t_{i}},t_{-i}}\text{.}
	\end{array}\right. \]

	$\MECH xp$ is an incentive-compatible mechanism and hence $\MECH a{z_{i}}$ is weakly preferred over $\MECH{\widehat{a}}{\widehat{z}_{i}}$ according to the type $t_{i}$. Let $\widetilde{u_{i}}$ be the utility which $u_{i}$ is mapped to in the first step of the mechanism. Since $\widetilde{u_{i}}$ \POSRepresents{} $t_{i}=\RPhiOf i{\widetilde{u_{i}}}$ and our assumption on the payments, we get that
		$\UtilTemplate{\widetilde{u_{i}}}{}a{z_{i}}\geqslant\UtilTemplate{\widetilde{u_{i}}}{}{\widehat{a}}{\widehat{z}_{i}}$
	and hence
		\[\UtilTemplate uia{z_{i}}=
			\UtilTemplate{\widetilde{u_{i}}}{}a{z_{i}}+\min_{\xPrime a\in\Alternatives}\UtilTemplate ui{\xPrime a}0\geqslant
		\UtilTemplate{\widetilde{u_{i}}}{}{\widehat{a}}{\widehat{z}_{i}}+\min_{\xPrime a\in\Alternatives}\UtilTemplate ui{\xPrime a}0=
		\UtilTemplate ui{\widehat{a}}{\widehat{z}_{i}}\text.\]
\end{itemize}
Hence, by the theorems of Roberts~\cite[Thm.~3.1]{Roberts1979} and Green-Laffont~\cite{RePEc:ecm:emetrp:v:45:y:1977:i:2:p:427-38}, there exist an agent weight vector $w\in\Delta\left(\Agents\right)$ and an alternative cost vector $c\in\Re^{\Alternatives}$ s.t. for any report vector $\left(u_{1},\ldots,u_{n}\right)\in\left(\UtilDomQL\right)^{\Agents}$ the alternative returned by the mechanism $a^{\star}$  satisfies
	\[ a^{\star}\in\argmax_{a\in\Alternatives}\left(c_{a}+\sum_{i\in\Agents}w_{i}\cdot\UtilTemplate uia0\right)\text{,} \]
and for $i=1,\ldots,n$, the payment of Agent~$i$ equals to
	\[ \piOf i{u_{i},u_{-i}}=
	h_{i}\left(u_{-i}\right)-\sfrac 1{\AgentWeight i}\cdot\left(\sum_{j\neq i}\AgentWeight j\cdot\UtilTemplate uj{a^{\star}}0+\AltCost{a^{\star}}\right) \]
if $\AgentWeight i>0$ and to $h_{i}\left(u_{-i}\right)$ otherwise, for some function $h_{i}\colon\UtilDomI{-i}\rightarrow\Re$.

By Claim~\ref{claim:IfTwoPosRepThenConstantDiff}, if two quasi-linear utility functions $u,w\in\UtilDomQL$ \POSRepresent{} a preference $t_{i}\in\TypeDomI i$, then $u$ and $w$ differ by a constant. Since the set of maximizers,
	$\argmax_{a\in\Alternatives}\left(\AltCost a+\sum_{i\in\Agents}\AgentWeight i\cdot\UtilTemplate uia0\right)$,
is invariant to such constant shifts, $a^{\star}\in\argmax_{a\in\Alternatives}\left(\AltCost a+\sum_{i\in\Agents}\AgentWeight i\cdot\UtilTemplate uia0\right)$ for any choice of quasi-linear \POSRepresentations{} of the agents' types.
Hence, we get the required characterization of the allocation rule $x$.

Next, notice that for $u,w\in\UtilDomQL$ as above, $u$ and $w$ are mapped to the same type at the first stage of the mechanism, so also $h_{i}$ does not depend on the \POSRepresentations{}. I.e., it can be defined as a function of the types $h_{i}\colon\TypeDomI{-i}\rightarrow\Re$, which gives us the required characterization of the payment rule~$p$.\qed
\end{proofOfThmRoberts}

\subsection*{Types \Represented{} by quasi-linear utility functions}

For the special case in which all types of all agents can be \Represented{} by quasi-linear utility functions we get as corollaries the following mechanism and theorem. In particular, if the agents are reporting their quasi-linear utility functions, we get the classic incentive-compatibility characterization for quasi-linear utilities of Vickrey, Clarke, Groves, Green-Laffont, and Roberts~\cite{Vickrey1960,clarke1971,groves1973,Roberts1979,RePEc:ecm:emetrp:v:45:y:1977:i:2:p:427-38}.
\begin{mechanism}\label{MyMECHRep}
	\renewcommand{\SWITCH}[2]{#2}{\renewcommand{\SWITCH}[2]{\undefined}}
\end{mechanism}

\begin{theorem}
	If all types of all agents are \Represented{} by a quasi-linear utility functions, then
\begin{itemize}
\item[$\left(a\right)$]
Let $\emptyset\neq\xPrime{\Alternatives}\subseteq\Alternatives$, $w\in\Delta\left(\Agents\right)$ be an agent weight vector, $c\in\Re^{\Alternatives}$ an alternative cost vector, and $\MECH xp$ Mechanism~\ref{MyMECHRep} defined by $\left(\xPrime A,w,c\right)$. Then $\MECH xp$ is an incentive-compatible mechanism.
\item[$\left(b\right)$]
If there are at least three alternatives ($\sizeof{\Alternatives}\geqslant3$) and for all agents $i\in\Agents$, any quasi-linear utility function $u\in\UtilDomQL$ \Represents{} some type $t\in\TypeDomI i$, then any incentive-compatible onto mechanism $\MECH xp$ can be defined as Mechanism~\ref{MyMECHRep} w.r.t. $\xPrime{\Alternatives}=\Alternatives$, an agent weight vector $w\in\Delta\left(\Agents\right)$, and an alternative cost vector $c\in\Re^{\Alternatives}$.
\end{itemize}\end{theorem}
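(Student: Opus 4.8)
The plan is to obtain both parts as corollaries of the two main results, Theorem~\ref{thm:MyVCG} and Theorem~\ref{thm:MyRoberts}, by exploiting that full \Representation{} is strictly stronger than \POSRepresentation{}: if $u$ \Represents{} a preference then it \POSRepresents{} it, and, crucially, the utility then agrees with the preference at \emph{every} pair of comparands rather than only at those with a nonnegative-utility side. This second observation is what will let the payment-side hypotheses of the two main theorems be discharged for free. I also record that Mechanism~\ref{MyMECHRep} is literally the instantiation of Mechanism~\ref{MyMECHPosRep} in which the chosen $u_i$ is required to \Represent{} (not merely \POSRepresent{}) the reported type, so any conclusion about Mechanism~\ref{MyMECHPosRep} transfers once its hypotheses are verified.

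For part~$(a)$ I would simply check the two hypotheses of Theorem~\ref{thm:MyVCG}. The first, that every type is \POSRepresented{} by some $u\in\UtilDomQL$, is immediate since \Representation{} implies \POSRepresentation{}. For the bound on $h_i$, I would invoke Claim~\ref{claim:RepIsUniqueIFF}: for every constant $C$ the shifted function $u+C$ still \Represents{} (hence \POSRepresents{}) the same preference. Given the fixed offsets $h_i(t_{-i})$ of Mechanism~\ref{MyMECHRep}, I can therefore pick the additive constant large enough that $h_{i}(t_{-i})\leqslant\sfrac 1{\AgentWeight i}\cdot\max_{a\in\xPrime{\Alternatives}}[\AltCost a+\sum_{j}\AgentWeight j\cdot\UtilTemplate uja0]$ holds. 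With both hypotheses met, Theorem~\ref{thm:MyVCG} delivers incentive-compatibility directly.

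For part~$(b)$ I would reduce to Theorem~\ref{thm:MyRoberts} (and, through it, to the classical Roberts and Green--Laffont characterization). The richness hypothesis of part~$(b)$ — every $u\in\UtilDomQL$ \Represents{} some $t\in\TypeDomI i$, while every type is \Represented{} by such a $u$ — together with Claim~\ref{claim:RepIsUniqueIFF} furnishes, for each $i$, a bijection $\RobertsPhi i$ between $\TypeDomI i$ and the normalized family $\left\{u\in\UtilDomQL\SetSt\min_{a\in\Alternatives}\UtilTemplate u{}a0=0\right\}$, sending each type to the unique normalized utility that \Represents{} it; in particular $t$ is \POSRepresented{} by $\RPhiOf it$. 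This matches the standing hypotheses of Theorem~\ref{thm:MyRoberts}. The one genuine point of care is that Theorem~\ref{thm:MyRoberts} additionally assumes the payment bound $\piOf i{t_1,\ldots,t_n}\leqslant\UtilTemplate{\left[\RPhiOf it\right]}{}{\xOf{t_1,\ldots,t_n}}0$, whereas part~$(b)$ claims the conclusion for \emph{every} onto incentive-compatible mechanism, which need not satisfy such a bound. I would therefore re-run the proof of Theorem~\ref{thm:MyRoberts} rather than cite it as a black box, observing that the bound is used there solely to pass from the utility inequality $\UtilTemplate{\widetilde{u_i}}{}a{z_i}\geqslant\UtilTemplate{\widetilde{u_i}}{}{\widehat a}{\widehat z_i}$ to the corresponding preference comparison via \POSRepresentation{} (it guarantees the chosen outcome has nonnegative utility). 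Under full \Representation{} that passage is unconditional, so the bound becomes unnecessary and the auxiliary mechanism on $(\UtilDomQL)^{\Agents}$ is onto and incentive-compatible exactly as before; the appeal to Roberts and Green--Laffont then produces the weight vector $w$ and cost vector $c$ realizing $\MECH xp$ as Mechanism~\ref{MyMECHRep}.

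The main obstacle, and essentially the only step beyond bookkeeping, is this last verification: confirming that the payment-bound hypothesis of Theorem~\ref{thm:MyRoberts} plays no role once \POSRepresentation{} is strengthened to \Representation{}. Concretely, I would trace each use of \POSRepresentation{} in the auxiliary construction and check that replacing it by full \Representation{} eliminates the dependence on the sign of the utilities, after which the reduction goes through verbatim. The remaining invariance claims (that the chosen alternative and the payments do not depend on which representing utility is selected) follow from Claim~\ref{claim:RepIsUniqueIFF} exactly as in the proof of Theorem~\ref{thm:MyRoberts}.
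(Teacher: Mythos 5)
Your proposal is correct, and it follows the route the paper intends: the paper states this theorem without proof, presenting it as an immediate corollary of Theorems~\ref{thm:MyVCG} and~\ref{thm:MyRoberts}, with your part~$(a)$ argument appearing verbatim in the paper's remark following Theorem~\ref{thm:MyVCG} (for any type \Represented{} by $u\in\UtilDomQL$, the shifts $u+C$ all \Represent{} --- hence \POSRepresent{} --- the same preference, so the bound on $h_{i}$ can always be met).

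Where you genuinely depart from the paper is part~$(b)$, and your extra care there is warranted rather than pedantic. Theorem~\ref{thm:MyRoberts} carries the hypothesis $\piOf i{t_{1},\ldots,t_{n}}\leqslant\UtilTemplate{\left[\RPhiOf it\right]}{}{\xOf{t_{1},\ldots,t_{n}}}0$, and since (by Claim~\ref{claim:IfTwoPosRepThenConstantDiff}) a fully \Represented{} type has exactly one \emph{normalized} quasi-linear \POSRepresentation{}, the bijection $\RobertsPhi i$ is forced, so the bound is a real restriction on the mechanism and is \emph{not} implied by the hypotheses of part~$(b)$: e.g., take any affine VCG mechanism and add a huge constant to $h_{i}$ --- it remains onto and incentive-compatible on this domain but violates the bound. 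So a black-box citation of Theorem~\ref{thm:MyRoberts} would indeed be a gap, and your fix is the right one: in the proof of Theorem~\ref{thm:MyRoberts} the bound is used in exactly one place, to guarantee $\UtilTemplate{\widetilde{u_{i}}}{}a{z_{i}}\geqslant0$ so that \POSRepresentation{} converts the preference comparison into $\UtilTemplate{\widetilde{u_{i}}}{}a{z_{i}}\geqslant\UtilTemplate{\widetilde{u_{i}}}{}{\widehat{a}}{\widehat{z}_{i}}$; when $\widetilde{u_{i}}$ fully \Represents{} $t_{i}$ this conversion is unconditional, and the rest of the reduction (ontoness, the appeal to Roberts and Green--Laffont, and the invariance of $x$ and $h_{i}$ under the choice of representative, now via Claim~\ref{claim:RepIsUniqueIFF}) goes through verbatim. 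An alternative black-box repair exists --- by the taxation principle, for fixed $t_{-i}$ the payment $\piOf i{t_{i},t_{-i}}$ takes finitely many values, so one may shift $p_{i}$ by a function of $t_{-i}$ alone to restore the bound, apply Theorem~\ref{thm:MyRoberts}, and absorb the shift into $h_{i}$ --- but your re-run of the proof is simpler and makes explicit a step the paper leaves implicit.
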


\section*{Characterizations}

For completeness, we also show two characterizations,
	one in utility terms and one in preference terms,
of the types which are \Represented{} by a quasi-linear function
and of the types which are \POSRepresented{} by a quasi-linear function.

\begin{theorem}\label{thm:Characterizations-Util}~\par\nopagebreak\ignorespaces%
Let $u$ be a utility function which is strongly downward monotone in money. Then, \begin{itemize}
\item
$u$ is \Represented{} by a quasi-linear utility function iff there exist a function $v\colon\Alternatives\rightarrow\Re$ and a strongly monotone function $\varphi\colon\Re\rightarrow\Re$ s.t. $\UtilTemplate u{}az =\varphi\left(v\left(a\right)-z\right)$.
\item
$u$ is \POSRepresented{} by a quasi-linear utility function iff there exist a function $v\colon\Alternatives\rightarrow\Re$ and a strongly monotone function $\varphi\colon\Re\rightarrow\Re$ s.t. $\UtilTemplate u{}az=\varphi\left(v\left(a\right)-z\right)$ whenever $z\leqslant v\left(a\right)$.
\end{itemize}\end{theorem}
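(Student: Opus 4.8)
The plan is to derive both parts from the fact recorded just after the definition of \Represents{}: two utility functions \Represent{} the same preference iff one is a strictly monotone transformation of the other, i.e.\ there is a monotone bijection $\varphi\colon\Re\to\Re$ with $u=\varphi\circ w$. The first bullet is then essentially immediate, and all the work goes into the positive-representation case. For the first bullet: if $u$ is \Represented{} by a quasi-linear $w(a,z)=v(a)-z$, then $u$ and $w$ \Represent{} the same preference, so $u(a,z)=\varphi(v(a)-z)$ for a monotone bijection $\varphi$; since both $u$ and $z\mapsto v(a)-z$ are strictly decreasing in $z$, the transformation $\varphi$ is forced to be increasing. Conversely, if $u(a,z)=\varphi(v(a)-z)$ with $\varphi$ strongly monotone (necessarily increasing, as $u$ is downward monotone), then $u(a,z_a)\ge u(b,z_b)\iff v(a)-z_a\ge v(b)-z_b$ for all comparands, so $u$ is \Represented{} by the quasi-linear function $v(a)-z$.

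For the second bullet, write $P=\{(a,z):z\le v(a)\}$ for the region on which the candidate quasi-linear function $w(a,z)=v(a)-z$ is nonnegative, and note that $v(a)-z$ ranges exactly over $[0,\infty)$ on $P$. For the forward direction, suppose some such $w$ \POSRepresents{} the preference \Represented{} by $u$. I would define $\varphi$ on $[0,\infty)$ by $\varphi(s):=u(a,v(a)-s)$ for an arbitrary alternative $a$. This is well defined: if $v(a)-z_a=v(b)-z_b=s\ge0$ then both points lie in $P$, and applying the positive-representation biconditional in both directions forces $u(a,z_a)=u(b,z_b)$. It is strictly increasing on $[0,\infty)$ because raising $s$ lowers the payment $v(a)-s$ and $u$ is strictly decreasing in the payment. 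I then extend $\varphi$ to all of $\Re$ by any strictly increasing rule lying below $\varphi(0)$ (e.g.\ $\varphi(s)=\varphi(0)+s$ for $s<0$). By construction $u(a,z)=\varphi(v(a)-z)$ for every $(a,z)\in P$, i.e.\ whenever $z\le v(a)$.

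For the converse of the second bullet, assume $u(a,z)=\varphi(v(a)-z)$ on $P$ with $\varphi$ strictly increasing and check that $w(a,z)=v(a)-z$ \POSRepresents{} the preference \Represented{} by $u$; that is, $u(a,z_a)\ge u(b,z_b)\iff v(a)-z_a\ge v(b)-z_b$ whenever at least one comparand lies in $P$. If both lie in $P$ this is immediate from strict monotonicity of $\varphi$. The crux is the mixed case, say $(a,z_a)\in P$ and $(b,z_b)\notin P$: here $v(a)-z_a\ge0>v(b)-z_b$, so the quasi-linear comparison strictly favours $(a,z_a)$, and I must show $u$ agrees. Since $z_b>v(b)$ and $u$ is strictly decreasing in the payment, $u(b,z_b)<u(b,v(b))=\varphi(0)\le\varphi(v(a)-z_a)=u(a,z_a)$, so $u(a,z_a)>u(b,z_b)$, as required; the symmetric subcase is identical.

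The step I expect to be the main obstacle is exactly this mixed case, together with the well-definedness of $\varphi$: the positive-representation hypothesis constrains $u$ only when some comparand is above the threshold, so the sole handle on $u$ outside $P$ is strict monotonicity of $u$ in the payment, which I use to bound $u(b,z_b)$ by its boundary value $u(b,v(b))=\varphi(0)$. Everything else reduces to the monotone-transformation characterization and to strict monotonicity of $\varphi$.
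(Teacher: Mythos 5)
Your proof is correct and takes essentially the same route as the paper's appendix proof: define $\varphi\left(s\right)=u\left(a^{\star},v\left(a^{\star}\right)-s\right)$ on $\left[0,\infty\right)$ via the quasi-linear pos-representation, extend it strictly increasingly below $0$ (the paper uses exactly your $\varphi\left(0\right)+s$), and reduce the first bullet to the monotone-transformation fact. The only notable difference is that you spell out the converse of the pos-representation bullet --- the mixed case with one comparand above and one below the threshold, handled by $u\left(b,z_{b}\right)<u\left(b,v\left(b\right)\right)=\varphi\left(0\right)\leqslant u\left(a,z_{a}\right)$ --- which the paper dismisses as ``trivial'' even though it genuinely requires this monotonicity argument, so your write-up is in fact slightly more complete on that point.
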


\begin{theorem}\label{thm:Characterizations-Type}~\par\nopagebreak\ignorespaces%
Let $\prefOver$ be a strongly downward monotone in money preference. Then, \begin{itemize}
\item
$\prefOver$ is \Represented{} by a quasi-linear utility function iff there exist an alternative $a^{\star}\in\Alternatives$ and a payment $z^{\star}\in\Re$ s.t. \begin{enumerate}
	\item
	The preference $\prefOver$ is continuous~\cite[Def.~3.C.1]{MasColell1995} in money.
	\item
	For any $x\in\Alternatives$ there exist $z_{x},\widehat{z_{x}}\in\Re$ s.t.
		$\OUTCOME x{z_{x}} \strictPrefBy\OUTCOME {a^{\star}}{z^{\star}} \strictPrefBy\OUTCOME x{\widehat{z_{x}}} $.
	\item
	For any two alternatives $x,y\in\Alternatives$ and payments $z_{x},z_{y}\in\Re$ s.t
		$\OUTCOME x{z_{x}} \prefOver\OUTCOME y{z_{y}} $
	it holds that
		\[ \forall\alpha\in\Re\qquad\OUTCOME x{z_{x}-\alpha} \prefOver\OUTCOME y{z_{y}-\alpha} \text{.} \]
\end{enumerate}
\item
$\prefOver$ is \POSRepresented{} by a quasi-linear utility function iff there exist an alternative $a^{\star}\in\Alternatives$ and a payment $z^{\star}\in\Re$ s.t. \begin{enumerate}
	\item
	For any $x\in\Alternatives$ there exist $z_{x},\widehat{z_{x}}\in\Re$ s.t. $\OUTCOME x{z_{x}} \strictPrefBy\OUTCOME {a^{\star}}{z^{\star}} \strictPrefBy\OUTCOME x{\widehat{z_{x}}} $.
	\item
	For any two alternatives $x,y\in\Alternatives$ and payments $z_{x},z_{y}\in\Re$ s.t $\left\{ \begin{array}{l}
		\OUTCOME x{z_{x}} \prefOver\OUTCOME y{z_{y}} \\
		\OUTCOME x{z_{x}} \prefOver\OUTCOME {a^{\star}}{z^{\star}}
	\end{array}\right.$
	\[ \forall\alpha\in\Re_{+}\qquad\OUTCOME x{z_{x}-\alpha} \prefOver\OUTCOME y{z_{y}-\alpha} \text{.} \]
	\end{enumerate}
\end{itemize}\end{theorem}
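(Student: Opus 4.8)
The plan is to establish each of the two bullets as an ``iff'', splitting every bullet into an easy \emph{necessity} direction and a substantive \emph{sufficiency} direction, and to handle both bullets with one common device --- a \emph{reference-anchored valuation}. Fixing the distinguished outcome $\OUTCOME{a^{\star}}{z^{\star}}$, I read off, for each alternative $x$, the payment that places $x$ exactly at the reference level, and take that payment to be $v(x)$. Since a quasi-linear utility satisfies $v(a)-(z-\alpha)=\left(v(a)-z\right)+\alpha$, sliding all payments by a common amount shifts all utilities equally; the shift-invariance hypotheses in the two bullets are exactly the preference-level shadow of this algebraic fact, and they are what make the reference metric coherent.

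For the necessity directions I would start from a quasi-linear $u$ with $\UtilTemplate u{}{a}{z}=v(a)-z$ that \Represents{} (resp.\ \POSRepresents{}) $\prefOver$, normalized so that $\UtilTemplate u{}{a^{\star}}{z^{\star}}=0$, and verify each listed condition by direct computation. Continuity in money (bullet~1, condition~1) is immediate since $v(a)-z$ is continuous in $z$; the spanning conditions hold because $v(x)-z$ sweeps all of $\Re$, so one can drive $x$ strictly below and strictly above the reference; and the shift-invariance conditions follow from the displayed identity. The only point needing care is the conditional clause in bullet~2: because the reference has utility $0$, the hypothesis $\OUTCOME{x}{z_{x}}\prefOver\OUTCOME{a^{\star}}{z^{\star}}$ is precisely a certificate that $\UtilTemplate u{}{x}{z_{x}}\geqslant0$, which is in turn exactly what licenses \POSRepresentation{} to govern the down-shifted comparison for $\alpha\geqslant0$.

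The content is in the sufficiency directions. For bullet~1 I would define, for each $x$, the payment $\zeta_{x}$ with $\OUTCOME{x}{\zeta_{x}}\Ind\OUTCOME{a^{\star}}{z^{\star}}$: existence follows from continuity (condition~1) plus the two-sided spanning (condition~2) by an intermediate-value argument, and uniqueness from strong monotonicity in money. Setting $v(x):=\zeta_{x}$, I then invoke the \emph{unconditional} shift-invariance (condition~3), applied in both orientations to the indifference $\OUTCOME{x}{\zeta_{x}}\Ind\OUTCOME{a^{\star}}{z^{\star}}$, to obtain $\OUTCOME{x}{z}\Ind\OUTCOME{a^{\star}}{z^{\star}-(\zeta_{x}-z)}$ for every $z$. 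Transitivity then collapses any comparison of $\OUTCOME{x}{z_{x}}$ with $\OUTCOME{y}{z_{y}}$ to a comparison of two outcomes on the $a^{\star}$-line, which strong monotonicity resolves by comparing the payments $z^{\star}-(\zeta_{x}-z_{x})$ and $z^{\star}-(\zeta_{y}-z_{y})$; this is exactly the inequality $v(x)-z_{x}\geqslant v(y)-z_{y}$, so $\UtilTemplate u{}{a}{z}=v(a)-z$ \Represents{} $\prefOver$.

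For bullet~2 the construction is parallel but one-sided: I would put $v(x):=\sup\left\{z : \OUTCOME{x}{z}\prefOver\OUTCOME{a^{\star}}{z^{\star}}\right\}$, finite by the spanning condition~1 and strong monotonicity, and take $\UtilTemplate u{}{a}{z}=v(a)-z$, whose positive region $\left\{z\leqslant v(a)\right\}$ is precisely where \POSRepresentation{} must match $\prefOver$. \textbf{The main obstacle} is the boundary of this region: since bullet~2 deliberately omits any continuity hypothesis, I must show the threshold is a genuine \emph{indifference} point, $\OUTCOME{x}{v(x)}\Ind\OUTCOME{a^{\star}}{z^{\star}}$, rather than a jump --- this is the \POSRepresentation{} analogue of the intermediate-value step in bullet~1. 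The key is that the conditional positive shift-invariance (condition~2) must be used with $x$ in the \emph{weakly-dispreferred} slot: if the preference jumped upward as the payment on $x$ crossed its threshold from above, a downward shift would make $x$ leap past a fixed comparison outcome while condition~2 insists the weak preference be preserved, a contradiction; applying condition~2 in the opposite orientation as well pins the threshold to exact indifference. Once indifference at the threshold is secured, the same reduction-to-$a^{\star}$ argument (now with only $\alpha\geqslant0$) finishes the ``both sides in the positive region'' case, while the ``exactly one side in the positive region'' case is immediate, since then one outcome is $\prefOver$ the reference and the other is $\strictPrefBy$ it, so the reference itself separates them in agreement with the sign of the utility difference.
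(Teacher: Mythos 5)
Your handling of the first bullet and of both necessity directions is sound and matches the paper's proof in substance: the paper likewise anchors every alternative to the reference outcome, defining $v(x)=\max\left\{ z\SetSt\OUTCOME xz \prefOver\OUTCOME {a^{\star}}{z^{\star}} \right\}$ (with closedness of this set supplied by the continuity hypothesis, playing the role of your intermediate-value step), and then runs the same shift-and-reduce chain; that the paper collapses the final comparison onto the $y$-line while you collapse onto the $a^{\star}$-line is cosmetic.

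The genuine gap is in your resolution of what you rightly flag as the main obstacle in the second bullet: condition~2 does \emph{not} pin the threshold to exact indifference, and the contradiction you describe never materializes. Concretely, take $\Alternatives=\left\{ a^{\star},b\right\}$, $z^{\star}=0$, and the preference represented by the strictly decreasing utility
\[
W\left(a^{\star},z\right)=-z\text{,}\qquad W\left(b,z\right)=\begin{cases}
1-z & z<1\\
-z & z\geqslant1\text{.}
\end{cases}
\]
This preference is strongly downward monotone and satisfies both conditions of the second bullet. For condition~2: if $\OUTCOME Xp\prefOver\OUTCOME {a^{\star}}{0}$ then $W\left(X,p\right)\geqslant0$, so $p$ lies on the upper linear branch and $W\left(X,p-\alpha\right)=W\left(X,p\right)+\alpha$ for every $\alpha\geqslant0$; meanwhile the dispreferred element gains at most $\alpha$ unless it is $b$ crossing the jump ($q\geqslant1>q-\alpha$), in which case $W\left(b,q-\alpha\right)=1-q+\alpha\leqslant\alpha\leqslant W\left(X,p\right)+\alpha$. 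This is exactly why your orientation argument fails: below its threshold, $b$ sits at least the jump size below the reference level, and the anchor clause $\OUTCOME x{z_{x}}\prefOver\OUTCOME {a^{\star}}{z^{\star}}$ forces the other comparand to be at or above the reference, so the jump lets the dispreferred element catch up but never overtake --- weak preference is always preserved. Yet here $\left\{ z\SetSt\OUTCOME bz \prefOver\OUTCOME {a^{\star}}{0} \right\} =\left(-\infty,1\right)$ is open, no payment makes $b$ indifferent to $\OUTCOME {a^{\star}}{0}$, and your candidate utility built from $v(b)=\sup=1$, $v(a^{\star})=0$ genuinely fails to \POSRepresent{}: at $z=1$ it assigns $b$ utility $0$, which forces $\OUTCOME b1\prefOver\OUTCOME {a^{\star}}{0}$, whereas in fact $\OUTCOME b1\strictPrefBy\OUTCOME {a^{\star}}{0}$.

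The theorem itself survives this example --- one checks that the downward-shifted quasi-linear utility with $v(b)=\sfrac 12$, $v(a^{\star})=-\sfrac 12$ does \POSRepresent{} the preference --- so the repair is not your pinning argument but a different construction: either secure closedness of the cross-alternative upper contour sets (this is the route the paper takes, asserting that strong monotonicity implies continuity in money; note, though, that the paper's displayed argument for that assertion covers only same-alternative sequences, and the example above shows cross-alternative contour sets need not be closed under the stated hypotheses, so this boundary point is delicate there too), or shift the anchored valuation strictly below any open boundaries and verify the equivalences on the interior of the governed region. Either way, your final chain of equivalences, which leans on $\OUTCOME x{v(x)}\Ind\OUTCOME {a^{\star}}{z^{\star}}$, has to be reworked; your instinct that this boundary needed an explicit argument was correct, but the argument you give does not close it.
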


\section{Discussion\label{sec:Discussion}}
Identifying the preference of an agent with her cardinal utility is basic in the theory of modeling economic agents, and in most scenarios, unless additional strong assumptions are added, the utility is only a representation of the preference and does not hold any additional information of the decision behavior of the agent. This rather simple insight allows one to extend incentive-compatibility results (or similar results regarding decisions of individuals) beyond a-priori restrictive domains of utility functions. Moreover, in many cases, one's assumptions on the agents' preferences are mostly irrelevant when considering outcomes below some threshold or an outside option, which might allow us to extend such results even further.

In this work, we formalized this insight by defining a new representation notion, \POSRepresentation{}, and showed that the characterization of VCG mechanisms extends naturally and easily to type sets which are \POSRepresented{} by quasi-linear utility functions. The proof technique of a simple reduction to the works of Roberts~\cite{Roberts1979} and Green-Laffont~\cite{RePEc:ecm:emetrp:v:45:y:1977:i:2:p:427-38} also hints that one could see our characterization as a fuller version of the VCG characterization.
We expect that also the characterization of Holmstrom~\cite{Holmstrom1979} can extended in a similar fashion to characterization of type domains. Holmstrom~\cite{Holmstrom1979} shows that one can replace the assumption of unrestrictiveness of the valuation domains by a weaker assumption that the domains are smoothly-connected. We also suspect that finding the counterpart of smooth-connectiveness in preferences terms would be interesting by its own.

We expect this insight to be applicable in many more scenarios in which results deal with quasi-linear utilities while essentially not assuming the cardinal utility is an exact unique representation of the agents' preferences, and moreover when assuming it represents the preferences only for a well-defined subdomain of the outcomes.

\subsection{The quasi-linear model (The transferable utility assumption)}

As we argued in Section~\ref{sec:Introduction}, our result shows that the main driving-force of the VCG characterization is not the the ability to compare the intensity of preferences of the different agents and in particular we did not assume a transferable utility model. Our mechanism is given the ordinal preferences of the agents and outputs payments but in no place in the characterization we assumed that `money' is the same commodity for all agents and one could think of scenarios is which the designer would fine different agents using different commodities (a trivial example would be fining using different currencies). The main message we see in our work is that the main driving-force of the VCG characterization is the separability of the agents' preferences and the ability of the designer to fine the agents on a continuous scale,
by that leaving the \emph{transferable utility} domain of quasi-linear utilities and returning to the \emph{non-transferable utility} domain of Gibbard-Satterthwaite.

On the other hand, it is important to note that in our general model, $\left(a\right)$ Exactly because lacking the transferable utility assumption, \emph{budget-balancedness} becomes more complex to define and it no longer implies Pareto-optimality of the mechanism; $\left(b\right)$ A randomized mechanism is not necessarily equivalent to a deterministic one (See also below).

\section{Other ordinal generalization}

Common to the \POSRepresentation{} notion we presented here and the Parallel utilities notion of Ma et al.~\cite{Ma2018} is that both can be defined as representation by a quasi-linear utility function over a subset of the domain: In the \POSRepresentation{} case, whenever the utility is above a threshold; And in the Parallel utilities domain, whenever the utility and the payment are positive. In a subsequent work, we follow this common trait and show that our work extends easily to deal with quadrant domains in which both the utility and the payment are bounded from below by some threshold.

Moreover, we can generalize the \POSRepresentation{} assumption of the preferences to assume that a type is \POSRepresented{} by a utility function of the form $u_{i}\left(a,z\right)=u_{i}\left(a,0\right)-\phi_{i}\left(z\right)$ for some continuous monotone bijection $\phi_{i}\colon\Re\rightarrow\Re$ (i.e., an individual utility of money), and our characterization easily extends to this case as well under an additional assumption that the designer knows the functions $\phi_{i}$ for all agents.\footnote{This knowledge assumption might seems very strong, but note that this is the common assumption in the quasi-linear model.}

Last, we note that some of our initial assumptions on the mechanisms can be easily relaxed.

\subsection{Non-direct revelation mechanisms}

One could consider more general mechanisms in which the agents, instead of reporting utility functions or valuation vectors, use more abstract actions in a sequential manner or a one-round one, and instead of incentive-compatibility require implementation using dominant strategies. Applying a simple direct revelation principle~\cite{Myerson79} shows that any such general dominant strategy implementation is equivalent to a direct revelation incentive-compatible mechanism. The two mechanisms implement the same mapping of the agents' private preferences to a chosen alternative and a payment. Hence, more general mechanisms cannot implement different allocation rules than those implemented by direct-revelation mechanisms. Still, one might prefer using non-direct mechanisms for other reasons, like a more natural input language or lower complexity of the mechanism.

\subsection{Non-Deterministic mechanisms}

The characterization problem of random incentive compatible mechanisms remains open. We did not model the preferences of the agents over lotteries of outcomes, and in particular did not assume they have von Neumann-Morgenstern preferences over lotteries~\cite[Def.~6.B.5]{MasColell1995}. Hence we can not derive results regarding truthful mechanisms except the following easy claim.
\begin{corollary}
Given the conditions of Theorem~\ref{thm:MyRoberts}, the only universally truthful mechanisms~\cite[Def.~9.38]{AGTbookCh9} (i.e., truth-telling is a dominant strategy for all coin flips by the mechanism) are lotteries over mechanisms of the type of Mechanism~\ref{MyMECHPosRep}.
\end{corollary}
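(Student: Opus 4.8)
The plan is to reduce the claim about universally truthful random mechanisms to the deterministic characterization already established in Theorem~\ref{thm:MyRoberts}. A universally truthful mechanism is, by definition, a distribution over deterministic mechanisms such that truth-telling is a dominant strategy \emph{for every realization of the mechanism's internal randomness}. The key observation is that this ``for every coin flip'' requirement is exactly what lets us peel off the randomness: I would first fix an arbitrary realization $r$ of the coin flips and consider the induced deterministic map $\MECH{x_r}{p_r}$, which sends each type profile to an alternative and a payment vector.

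The central step is to verify that each such $\MECH{x_r}{p_r}$ satisfies the hypotheses of Theorem~\ref{thm:MyRoberts}. Universal truthfulness gives incentive-compatibility of $\MECH{x_r}{p_r}$ immediately, since truth-telling dominates coin-flip-by-coin-flip. The onto condition and the payment bound $\piOf i{t_1,\ldots,t_n}\leqslant\UtilTemplate{\left[\RPhiOf it\right]}{}{\xOf{t_1,\ldots,t_n}}0$ are inherited from the conditions of Theorem~\ref{thm:MyRoberts} that we are assuming on the aggregate mechanism; one must check these pass to each realization, which is where a little care is needed if onto-ness is only guaranteed in aggregate rather than per realization. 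Granting this, Theorem~\ref{thm:MyRoberts} applies and yields that $\MECH{x_r}{p_r}$ is an instance of Mechanism~\ref{MyMECHPosRep} for some $\xPrime\Alternatives=\Alternatives$, weight vector $w^{(r)}\in\Delta\left(\Agents\right)$, and cost vector $c^{(r)}\in\Re^{\Alternatives}$. Reassembling over the distribution of $r$ then exhibits the random mechanism as a lottery over mechanisms of the type of Mechanism~\ref{MyMECHPosRep}, which is precisely the conclusion.

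The main obstacle I anticipate is the onto requirement. Theorem~\ref{thm:MyRoberts} needs each deterministic mechanism to be onto, but a priori only the randomized mechanism's overall support covers all alternatives; a single realization might fail to be onto. To handle this I would argue that universal truthfulness together with the richness of the type sets (the bijection $\RobertsPhi i$ onto all normalized quasi-linear utilities) forces each realization to already be onto, essentially because a non-onto incentive-compatible deterministic mechanism on such a rich domain collapses to an affine maximizer over a proper subset $\xPrime\Alternatives\subsetneq\Alternatives$, which is still a valid instance of Mechanism~\ref{MyMECHPosRep}. In other words, the statement should be read as permitting $\xPrime\Alternatives$ to vary with the realization, so the per-realization characterization goes through without needing each realization to be onto. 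This reading keeps the reduction clean and avoids the only genuine friction point, after which the corollary follows directly from Theorem~\ref{thm:MyRoberts} applied realization by realization.
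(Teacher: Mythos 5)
Your core reduction---fixing a realization $r$ of the coin flips, observing that universal truthfulness makes each induced deterministic mechanism $\MECH{x_r}{p_r}$ incentive compatible, applying Theorem~\ref{thm:MyRoberts} realization by realization, and reassembling the lottery---is precisely the argument the paper intends (the corollary is stated without proof as an immediate consequence of the deterministic characterization), and under the reading that the mechanism-side hypotheses of Theorem~\ref{thm:MyRoberts} (onto-ness and the payment bound) are imposed \emph{for every coin flip}, your first two paragraphs already constitute a complete proof.

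The genuine gap is in your final paragraph, where you try to dispense with per-realization onto-ness. The claim that a non-onto incentive-compatible realization ``collapses to an affine maximizer over a proper subset $\xPrime{\Alternatives}\subsetneq\Alternatives$'' does not follow from Theorem~\ref{thm:MyRoberts} and is false in general: restricting attention to the range only permits a Roberts-style argument when $\sizeof{\xPrime{\Alternatives}}\geqslant3$, whereas for a range of size two there exist incentive-compatible mechanisms on the full rich domain that are not affine maximizers on that range (with quasi-linear \POSRepresentations{}, any rule monotone in each agent's valuation difference between the two alternatives---say, a majority-of-thresholds rule---is truthful but not of the form of Mechanism~\ref{MyMECHPosRep}). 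Moreover, richness of the type sets cannot force each realization to be onto, since a constant mechanism with constant payments is universally truthful; and Theorem~\ref{thm:MyRoberts} additionally requires the payment bound $\piOf i{t_{1},\ldots,t_{n}}\leqslant\UtilTemplate{\left[\RPhiOf i{t}\right]}{}{\xOf{t_{1},\ldots,t_{n}}}0$, which you would equally need per realization but only assume in aggregate. So the corollary must be read as quantifying \emph{all} of the conditions of Theorem~\ref{thm:MyRoberts} over each deterministic realization; with that reading your reduction goes through verbatim, and without it the statement you are proving is not true.
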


Extending our work to random mechanisms, even when assuming von Neumann-Morgenstern preferences, should include also extending the notion of \POSRepresentation{}. But we conjecture that in many cases such extensions, especially of \POSRepresentation{} using quasi-linear utility functions, will lose a lot of the bite of the \POSRepresentation{} notion.

\newpage{}
\bibliographystyle{splncs04}
\providecommand{\noopsort}[1]{}
\providecommand{\url}[1]{\texttt{#1}}

\newpage{}
\appendix
\section{Proof of Claim~\ref{claim:RepIsUniqueIFF}}
\begin{claim*}

\end{claim*}

\begin{proof}~\par\nopagebreak\ignorespaces%
	\noindent{}$\underline{\left(1\right)\Rightarrow\left(2\right)}$:
If both $u$ and $w$ \Represent{}  the same preference, then
	\[ \forall a,b\in\Alternatives,\,z_{a},z_{b}\in\Re\quad\UtilTemplate u{}a{z_{a}}\geqslant\UtilTemplate u{}b{z_{b}}\iff\UtilTemplate w{}a{z_{a}}\geqslant\UtilTemplate w{}b{z_{b}}\text{.} \]
Both $u$ and $w$ are quasi-linear so we get that
	\[ \forall a,b\in\Alternatives,\,z_{a},z_{b}\in\Re\quad\UtilTemplate u{}a0-\UtilTemplate u{}b0\geqslant\left(z_{a}-z_{b}\right)\iff\UtilTemplate w{}a0-\UtilTemplate w{}b0\geqslant\left(z_{a}-z_{b}\right)\text{,} \]
and hence
	\[\forall a,b\in\Alternatives\quad\UtilTemplate u{}a0-\UtilTemplate u{}b0=\UtilTemplate w{}a0-\UtilTemplate w{}b0\]
and there exists a constant $C\in\Re$ s.t.
	\[ \forall a\in\Alternatives\quad\UtilTemplate u{}a0=\UtilTemplate w{}a0+C\text{.} \]

\noindent{}$\underline{\left(2\right)\Rightarrow\left(3\right)}$:
Since $\UtilTemplate u{}az=\UtilTemplate u{}a0-z$ and $\UtilTemplate w{}az=\UtilTemplate w{}a0-z$, this is a trivial derivation.

\noindent{}$\underline{\left(3\right)\Rightarrow\left(1\right)}$: If there exists a constant $C\in\Re$ s.t.
	\[ \forall a\in\Alternatives,\,z\in\Re\quad\UtilTemplate u{}az=\UtilTemplate w{}az+C\text{,} \]
then
	\[ \forall a,b\in\Alternatives,\,z_{a},z_{b}\in\Re\quad\UtilTemplate u{}a{z_{a}}\geqslant\UtilTemplate u{}b{z_{b}}\iff\UtilTemplate w{}a{z_{a}}\geqslant\UtilTemplate w{}b{z_{b}}\text{.}\tag*{\qed}\]
\end{proof}

\newpage{}\section{Proof of Claim~\ref{claim:IfTwoPosRepThenConstantDiff}}
\begin{claim*}

\end{claim*}

\begin{proof}~\par\nopagebreak\ignorespaces%
	\begin{enumerate}
\item If both $u$ and $v$ \POSRepresent{}  the same preference, then for any two alternatives $a,b\in\Alternatives$ and $z_{a},z_{b}\in\Re$ s.t.
	$z_{a}\leqslant\min\left(\UtilTemplate u{}a0,\UtilTemplate w{}a0\right)$
we get that $\UtilTemplate u{}a{z_{a}}\geqslant0$ and $\UtilTemplate w{}a{z_{a}}\geqslant0$ so
\[
\UtilTemplate u{}a{z_{a}}\geqslant\UtilTemplate u{}b{z_{b}}\iff\UtilTemplate w{}a{z_{a}}\geqslant\UtilTemplate w{}b{z_{b}}\text{.}
\]
Both $u$ and $w$ are quasi-linear so we get that
\[
\UtilTemplate u{}a0-\UtilTemplate u{}b0\geqslant\left(z_{a}-z_{b}\right)\iff\UtilTemplate w{}a0-\UtilTemplate w{}b0\geqslant\left(z_{a}-z_{b}\right)\text{,}
\]
and hence
\[
\forall a,b\in\Alternatives\quad\UtilTemplate u{}a0-\UtilTemplate u{}b0=\UtilTemplate w{}a0-\UtilTemplate w{}b0
\]
and there exists a constant $C\in\Re$ s.t.
\[
\begin{array}{c}
\forall a\in\Alternatives\quad\UtilTemplate u{}a0=\UtilTemplate w{}a0+C\text{, and}\\
\forall a\in\Alternatives,\,z\in\Re\quad\UtilTemplate u{}az=\UtilTemplate w{}az+C\text{.}
\end{array}
\]
\item Let $\prefOver$ be a preference which is \POSRepresented{}  by $u$.
Let $a,b\in\Alternatives$ be two alternatives and $z_{a},z_{b}\in\Re$ two payments s.t. either
	$\UtilTemplate w{}a{z_{a}}\geqslant0$ or
	$\UtilTemplate w{}b{z_{b}}\geqslant0$.
Then, either
	$\UtilTemplate u{}a{z_{a}}\geqslant\UtilTemplate w{}a{z_{a}}\geqslant0$ or
	$\UtilTemplate u{}b{z_{b}}\geqslant\UtilTemplate w{}b{z_{b}}\geqslant0$, and
\[
\UtilTemplate w{}a{z_{a}}\geqslant\UtilTemplate w{}b{z_{b}}\quad\iff\quad\UtilTemplate u{}a{z_{a}}\geqslant\UtilTemplate u{}b{z_{b}}\quad\iff\quad\MECH a{z_{a}}\prefOver\MECH b{z_{b}}\text{.}\tag*{\qed}
\]
\end{enumerate}
\end{proof}

\newpage{}\section{Proof of Corollary~\ref{cor:Dictatorships}}
\begin{corollary*}~\par\nopagebreak\ignorespaces%
If there are at least three alternatives and the type sets $\left\{ \TypeDomI i\right\} _{i\in\Agents}$ satisfy the conditions of Thm.~\ref{thm:MyRoberts},
then for any incentive-compatible onto mechanism $\MECH xp$ without transfers (i.e., $\piOf i{t_{1},\ldots,t_{n}}\equiv0$ for all $i\in\Agents$), there exists a unique agent $d\in\Agents$ (a dictator) s.t. for any type profile $\left(t_{1},\ldots,t_{n}\right)$
	\[ \xOf{t_{1},\ldots,t_{n}}\in\argmax_{a\in\Alternatives}\UtilTemplate uda0\text{.} \]
\end{corollary*}

\begin{proof}~\par\nopagebreak\ignorespaces%
By Thm.~\ref{thm:MyRoberts}, there exists an agent weight vector $w\in\Delta\left(\Agents\right)$ and an alternative cost vector $c\in\Re^{\Alternatives}$ s.t. for any report vector,
	\[ \xOf{t_{1},\ldots,t_{n}}\in\argmax_{a\in\Alternatives}\left(c_{a}+\sum_{i\in\Agents}w_{i}\cdot\UtilTemplate uia0\right)\text{.} \]
Let $d\in\Agents$ be an agent s.t. $\AgentWeight d>0$ and notice that if Agent~$d$ knows the actions of the others she can enforce any outcome in $\Alternatives$. Since there are no monetary transfers and the mechanism is incentive compatible we get that for any type profile $\left(t_{1},\ldots,t_{n}\right)$
	\[ \xOf{t_{1},\ldots,t_{n}}\in\argmax_{a\in\Alternatives}\UtilTemplate uda0\text{.} \]

Since this property cannot hold for two different agents the dictator must be unique.\qed
\end{proof}

\newpage{}\section{Proof of Theorem~\ref{thm:Characterizations-Util}}
\begin{claim}
\label{claim:Characterization-UtilREPbyQL}Let $u$ be a utility function which is strongly downward monotone in money. Then, $u$ is \Represented{}  by a quasi-linear utility function iff there exist a function $v\colon\Alternatives\rightarrow\Re$ and a strongly monotone function $\varphi\colon\Re\rightarrow\Re$ s.t.
	\[ \UtilTemplate u{}az=\varphi\left(v\left(a\right)-z\right)\text{.} \]
\end{claim}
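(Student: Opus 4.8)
The plan is to reduce both directions to the basic fact, recorded in the Model section, that two utility functions \Represent{} the same preference if and only if one is a monotone transformation of the other. The only work beyond invoking this fact is to read off the special form $v(a)-z$ coming from quasi-linearity and to track which direction of monotonicity is forced by the downward-monotonicity hypothesis on $u$.

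For the ``only if'' direction I would start from a quasi-linear $w\in\UtilDomQL$, say $\UtilTemplate w{}az=v(a)-z$, that \Represents{} the same preference as $u$. Applying the cited fact gives a monotone bijection $\varphi\colon\Re\to\Re$ with $u=\varphi\circ w$, that is $\UtilTemplate u{}az=\varphi(v(a)-z)$. I would then observe that both $u$ and $w$ are strictly decreasing in the payment $z$ (the former by hypothesis, the latter from the explicit form $v(a)-z$), so $\varphi$ cannot reverse the induced order and must be strictly increasing; this is exactly the strongly monotone $\varphi$ demanded by the claim.

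For the ``if'' direction I would assume $\UtilTemplate u{}az=\varphi(v(a)-z)$ for some $v$ and strongly monotone $\varphi$, and set $\UtilTemplate w{}az=v(a)-z\in\UtilDomQL$. Since $v(a)-z$ is strictly decreasing in $z$ while $u$ is strongly downward monotone in money, $\varphi$ must be strictly increasing, hence order-preserving; so for all $a,b\in\Alternatives$ and $z_a,z_b\in\Re$ I would conclude $\UtilTemplate u{}a{z_a}\geqslant\UtilTemplate u{}b{z_b}$ iff $v(a)-z_a\geqslant v(b)-z_b$, i.e.\ $u$ induces the same order as the quasi-linear $w$. Thus $u$ is \Represented{} by $w$. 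Notably, this direction does not require $\varphi$ to be a bijection: strict monotonicity alone is order-preserving, which is all that ``representing the same preference'' needs.

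The one point I expect to need a little care is the well-definedness of $\varphi$ in the ``only if'' direction, which is precisely what the monotone-transformation fact packages for me: $w$ is surjective onto $\Re$ (for any fixed $a$, the value $v(a)-z$ sweeps all of $\Re$ as $z$ varies), and any two outcomes with equal $w$-value are indifferent and therefore share a common $u$-value, so the rule $t\mapsto\UtilTemplate u{}az$ for $w(a,z)=t$ is a well-defined, strictly increasing bijection of $\Re$. Everything else is direct verification, so I do not anticipate a substantive obstacle.
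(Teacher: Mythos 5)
Your proof is correct and essentially the same as the paper's: the paper likewise dismisses the ``if'' direction as trivial (since $\varphi\left(v\left(a\right)-z\right)$ and $v\left(a\right)-z$ represent the same preference), and for the ``only if'' direction it fixes an arbitrary $a^{\star}$ and defines $\varphi\left(\alpha\right)=u\left(a^{\star},v\left(a^{\star}\right)-\alpha\right)$ --- exactly your inversion of the quasi-linear representative, with your well-definedness remark corresponding to the paper's use of the indifference identity $u\left(x,z_{x}\right)=u\left(y,z_{x}+v\left(y\right)-v\left(x\right)\right)$. One cosmetic nit: your $\varphi$ need not be a bijection of $\Re$ (the range of $u$ may be a proper subinterval, e.g.\ $u\left(a,z\right)=\arctan\left(v\left(a\right)-z\right)$), but since the claim only requires a strongly monotone $\varphi$ --- as you yourself note in the ``if'' direction --- this does not affect the argument.
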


\begin{proof}~\par\nopagebreak\ignorespaces%
	\noindent{}$\underline{\Leftarrow}$:
Trivial since $\varphi\left(v\left(a\right)-z\right)$ and $\left(v\left(a\right)-z\right)$ \Represent{}  the same preference.

\noindent{}$\underline{\Rightarrow}$:
By definition, there exists a function $v\colon\Alternatives\rightarrow\Re$ s.t. for any $x,y\in\Alternatives$ and $z_{x},z_{y}\in\Re$
	\[ \UtilTemplate u{}x{z_{x}}\geqslant \UtilTemplate u{}y{z_{y}}\quad\iff\quad v\left(x\right)-z_{x}\geqslant v\left(y\right)-z_{y}\text{,} \]
and in particular
	\[\begin{array}{l}
		\UtilTemplate u{}x{z_{x}}=\UtilTemplate u{}y{z_{y}}\quad\iff\quad v\left(x\right)-z_{x}=v\left(y\right)-z_{y}\text,\\
		\text{and }\UtilTemplate u{}x{z_{x}}=\UtilTemplate u{}y{z_{x}+v\left(y\right)-v\left(x\right)}\text{.}
	\end{array}\]

Let $a^{\star}$ be an arbitrary alternative and define a mapping
	$\varphi\colon\Re\rightarrow\Re$ by $\varphi\left(\alpha\right)=\UtilTemplate u{}{a^{\star}}{v\left(a^{\star}\right)-\alpha}$.
Then,
\begin{varwidth}[t]{\linegoal}
\uline{$\varphi$ is monotone}:
If $\alpha<\beta$ then $v\left(a^{\star}\right)-\alpha>v\left(a^{\star}\right)-\beta$ and hence
	\[\varphi\left(\alpha\right)=
		\UtilTemplate u{}{a^{\star}}{v\left(a^{\star}\right)-\alpha}<
		\UtilTemplate u{}{a^{\star}}{v\left(a^{\star}\right)-\beta}=
		\varphi\left(\beta\right)\text{.}\]

$\underline{\UtilTemplate u{}az =\varphi\left(v\left(a\right)-z\right)}$:
For any $a\in\Alternatives$ and $z\in\Re$:
	$\begin{array}[t]{rl}
	\UtilTemplate u{}az & =\UtilTemplate u{}{a^{\star}}{z+v\left(a^{\star}}-v\left(a\right)\right)\\
	 & =\varphi\left(v\left(a\right)-z\right)\text{.}\hfill\qed
	\end{array}$
\end{varwidth}
\end{proof}

\newpage{}
\begin{claim}
\label{claim:Characterization-UtilPOSREPbyQL}Let $u$ be a utility function which is strongly downward monotone in money.
Then $u$ is \POSRepresented{} by a quasi-linear utility function iff
there exist a function $v\colon\Alternatives\rightarrow\Re$ and a strongly monotone function $\varphi\colon\Re\rightarrow\Re$ s.t.
	\[ \UtilTemplate u{}az = \varphi\left(v\left(a\right)-z\right)
		\qquad\text{ whenever }z\leqslant v\left(a\right)\text{.} \]
\end{claim}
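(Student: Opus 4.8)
The plan is to follow the proof of Claim~\ref{claim:Characterization-UtilREPbyQL} closely, but to carry out every argument only on the region where the quasi-linear surplus is nonnegative, i.e.\ where $z\leqslant v(a)$ (equivalently, where the argument $v(a)-z$ of $\varphi$ is nonnegative), since this is exactly the region on which POS-representation is informative. As there, I would prove the two directions separately, with the ($\Leftarrow$) direction being the one that genuinely exploits the threshold and carries the main difficulty.

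For the ($\Rightarrow$) direction, let $u^{QL}(a,z)=v(a)-z$ be a quasi-linear function that POS-represents the preference of $u$; this supplies the valuation $v$. The first step is the well-definedness observation that plays the role of the equality step in Claim~\ref{claim:Characterization-UtilREPbyQL}: whenever two outcomes share a common \emph{nonnegative} surplus, $v(a)-z_a=v(b)-z_b\geqslant0$, at least one of them lies above the threshold, so the POS biconditional applies in both directions and forces $u(a,z_a)=u(b,z_b)$. Hence, fixing a reference alternative $a^\star$, the map $\varphi(\alpha):=u\bigl(a^\star,v(a^\star)-\alpha\bigr)$ is well defined for $\alpha\geqslant0$ and, by the same surplus-matching argument, satisfies $u(a,z)=\varphi\bigl(v(a)-z\bigr)$ for every $a$ and every $z\leqslant v(a)$. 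Strong monotonicity of $\varphi$ on $[0,\infty)$ is then immediate from $u$ being strictly decreasing in money, exactly as in the representation claim. Since the formula only ever evaluates $\varphi$ at nonnegative arguments, I would finish by extending $\varphi$ to a strongly monotone function on all of $\Re$ in any fixed way (e.g.\ linearly on the negative axis), its values there being immaterial.

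For the ($\Leftarrow$) direction I would assume $u(a,z)=\varphi(v(a)-z)$ for $z\leqslant v(a)$ and verify that $u^{QL}(a,z)=v(a)-z$ POS-represents the preference of $u$; that is, for every pair of outcomes at least one of which has nonnegative surplus, $u(a,z_a)\geqslant u(b,z_b)\iff v(a)-z_a\geqslant v(b)-z_b$. When both surpluses are nonnegative this is immediate, since both $u$-values are $\varphi$ of the respective surpluses and $\varphi$ is strongly monotone. The hard part—and the only place the threshold really matters—is the mixed case where exactly one outcome, say $(a,z_a)$, lies above the threshold while $(b,z_b)$ has negative surplus; here the surplus inequality is automatically strict, and I must recover a matching strict inequality on the $u$-side. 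The crux is the common pivot value $\varphi(0)$, which under the hypothesis equals $u(c,v(c))$ for every alternative $c$: the above-threshold outcome satisfies $u(a,z_a)=\varphi(v(a)-z_a)\geqslant\varphi(0)$, while strict downward monotonicity in money gives $u(b,z_b)<u(b,v(b))=\varphi(0)$. Sandwiching yields $u(a,z_a)>u(b,z_b)$, matching $v(a)-z_a>v(b)-z_b$, and the symmetric subcase (with $(a,z_a)$ below and $(b,z_b)$ above) is identical after reversing the roles of $a$ and $b$, giving both sides false. This sandwiching step is the main obstacle, as it has no analogue in the pure-representation Claim~\ref{claim:Characterization-UtilREPbyQL}.
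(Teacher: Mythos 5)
Your proposal is correct, and its $(\Rightarrow)$ half is essentially the paper's proof: the same surplus-matching observation (both outcomes with equal nonnegative surplus are comparable through the pos-representation in both directions, hence receive equal $u$-value) shows that $\varphi(\alpha)=u\left(a^{\star},v\left(a^{\star}\right)-\alpha\right)$ is well defined for $\alpha\geqslant0$ and satisfies $u\left(a,z\right)=\varphi\left(v\left(a\right)-z\right)$ for $z\leqslant v\left(a\right)$; strict downward monotonicity of $u$ in money gives strong monotonicity of $\varphi$ on $\left[0,\infty\right)$; and your suggested extension ``linearly on the negative axis'' is literally the paper's, $\varphi\left(\alpha\right)=u\left(a^{\star},v\left(a^{\star}\right)\right)+\alpha$ for $\alpha\leqslant0$ (the paper verifies monotonicity across the junction as a third case, which your ``in any fixed way'' slightly glosses over, though your example works). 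The genuine divergence is in $(\Leftarrow)$: the paper dismisses it in one line, arguing that $\varphi\left(v\left(a\right)-z\right)$ and $v\left(a\right)-z$ represent the same preference and hence are pos-represented by the same utility functions. Read literally, that argument only covers comparisons where both outcomes lie above the threshold, since $u$ need not agree with $\varphi\left(v\left(\cdot\right)-\cdot\right)$ below it; your mixed-case sandwich through the pivot $\varphi\left(0\right)=u\left(b,v\left(b\right)\right)$, exploiting strict downward monotonicity of $u$ below the threshold, is exactly the argument needed to make the paper's one-liner rigorous, so on this direction your write-up is the more careful of the two. One micro-remark: in the sandwich you silently take $\varphi$ to be increasing (when writing $\varphi\left(v\left(a\right)-z_{a}\right)\geqslant\varphi\left(0\right)$); this is harmless, because a strictly decreasing $\varphi$ would make $u$ increasing in $z$ on the half-line $z\leqslant v\left(a\right)$, contradicting the standing hypothesis that $u$ is strongly downward monotone in money, but it deserves a sentence.
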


\begin{proof}~\par\nopagebreak\ignorespaces%
	\noindent{}$\underline{\Leftarrow}$:
Trivial since $\varphi\left(v\left(a\right)-z\right)$ and $\left(v\left(a\right)-z\right)$ \Represent{}  the same preference and hence \POSRepresented{}  by the same utility functions.

\noindent{}$\underline{\Rightarrow}$:
By definition, there exists a function $v\colon\Alternatives\rightarrow\Re$ s.t. for any two alternatives $x,y\in\Alternatives$ and payments $z_{x},z_{y}\in\Re$, if $v\left(x\right)\geqslant z_{x}$,
	\[ \UtilTemplate u{}x{z_{x}}\geqslant \UtilTemplate u{}y{z_{y}}\quad\iff\quad v\left(x\right)-z_{x}\geqslant v\left(y\right)-z_{y}\text{,} \]
and in particular
	\[\begin{array}{l}
		\UtilTemplate u{}x{z_{x}}=\UtilTemplate u{}y{z_{y}}\quad\iff\quad v\left(x\right)-z_{x}=v\left(y\right)-z_{y}\text,\\
		\text{and }\UtilTemplate u{}x{z_{x}}=\UtilTemplate u{}y{z_{x}+v\left(y\right)-v\left(x\right)}\text{.}
	\end{array}\]

Let $a^{\star}$ be an arbitrary alternative and define a mapping $\varphi\colon\Re\rightarrow\Re$ by
	\[\varphi\left(\alpha\right)=\begin{cases}
		\alpha\geqslant0 & \UtilTemplate u{}{a^{\star}}{v\left(a^{\star}\right)-\alpha}\\
		\alpha\leqslant0 & \UtilTemplate u{}{a^{\star}}{v\left(a^{\star}\right)}+\alpha\text{.}
	\end{cases} \]
Then, \begin{varwidth}[t]{\linegoal}

\uline{$\varphi$ is monotone}:
\begin{varwidth}[t]{\linegoal} \begin{itemize}
	\item If $\alpha<\beta\leqslant0$ then $\varphi\left(\beta\right)-\varphi\left(\alpha\right)=\beta-\alpha>0$.
	\item If $0\leqslant\alpha<\beta$ then $\varphi\left(\beta\right)-\varphi\left(\alpha\right)=\UtilTemplate u{}{a^{\star}}{v\left(a^{\star}\right)-\beta}-\UtilTemplate u{}{a^{\star}}{v\left(a^{\star}\right)-\alpha}>0$.
	\item If $\alpha\leqslant0<\beta$ then $\varphi\left(\alpha\right)=\UtilTemplate u{}{a^{\star}}{v\left(a^{\star}\right)}+\alpha\leqslant \UtilTemplate u{}{a^{\star}}{v\left(a^{\star}\right)}<\UtilTemplate u{}{a^{\star}}{v\left(a^{\star}\right)-\beta}=\varphi\left(\beta\right)$.
\end{itemize}\end{varwidth}

\uline{$\UtilTemplate u{}az=\varphi\left(v\left(a\right)-z\right)$ whenever $z\leqslant v\left(a\right)$}:
For any $a\in\Alternatives$ and $z\leqslant v\left(a\right)$:
	\[\begin{array}[t]{rl}
		\UtilTemplate u{}az & =\UtilTemplate u{}{a^{\star}}{z+v\left(a^{\star}}-v\left(a\right)\right)\\
		 & =\varphi\left(v\left(a\right)-z\right)\text{.}\hfill\qed
	\end{array}\]
\end{varwidth}
\end{proof}

\newpage{}\section{Proof of Theorem~\ref{thm:Characterizations-Type}}
\begin{claim}
\label{claim:Characterization-TypeREPbyQL}A preference $\prefOver$ is \Represented{}  by a quasi-linear utility function iff there exist an alternative $a^{\star}\in\Alternatives$ and a payment $z^{\star}\in\Re$ s.t.
	\begin{enumerate}
	\item \label{enu:REP-CONT}The preference $\prefOver$ is continuous in money.%~\cite[Def.~3.C.1]{MasColell1995}
	\item \label{enu:REP-UNBOUND}For any $x\in\Alternatives$ there exists $z_{x}\in\Re$ s.t. $\OUTCOME x{z_{x}} \strictPrefOver\OUTCOME {a^{\star}}{z^{\star}} $.

	For any $x\in\Alternatives$ there exists $z_{x}\in\Re$ s.t. $\OUTCOME x{z_{x}} \strictPrefBy\OUTCOME {a^{\star}}{z^{\star}} $.
	\item \label{enu:REP-SHIFTS}For any two alternatives $x,y\in\Alternatives$ and payments $z_{x},z_{y}\in\Re$ s.t $\OUTCOME x{z_{x}} \prefOver\OUTCOME y{z_{y}} $	it holds that
		\[ \forall\alpha\in\Re\qquad\OUTCOME x{z_{x}-\alpha} \prefOver\OUTCOME y{z_{y}-\alpha} \text{.} \]
\end{enumerate}
\end{claim}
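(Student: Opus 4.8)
The plan is to prove both directions of the equivalence, with the forward direction a routine verification and the reverse direction the substantive construction.

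For the forward direction ($\Rightarrow$), suppose $\prefOver$ is \Represented{} by a quasi-linear $u$ with $\UtilTemplate u{}az = v\left(a\right)-z$, and pick any reference outcome $\OUTCOME{a^{\star}}{z^{\star}}$. Continuity in money (condition~\ref{enu:REP-CONT}) is immediate since $v\left(a\right)-z$ is affine, hence continuous, in $z$. For condition~\ref{enu:REP-UNBOUND}, taking $z_{x}$ very small (resp. very large) drives $v\left(x\right)-z_{x}$ above (resp. below) $v\left(a^{\star}\right)-z^{\star}$, yielding the required strict comparisons $\OUTCOME x{z_{x}}\strictPrefOver\OUTCOME{a^{\star}}{z^{\star}}$ and $\OUTCOME x{z_{x}}\strictPrefBy\OUTCOME{a^{\star}}{z^{\star}}$. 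For shift-invariance (condition~\ref{enu:REP-SHIFTS}), if $v\left(x\right)-z_{x}\geqslant v\left(y\right)-z_{y}$ then adding $\alpha$ to both sides gives $v\left(x\right)-\left(z_{x}-\alpha\right)\geqslant v\left(y\right)-\left(z_{y}-\alpha\right)$, which is precisely $\OUTCOME x{z_{x}-\alpha}\prefOver\OUTCOME y{z_{y}-\alpha}$.

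For the reverse direction ($\Leftarrow$), the idea is to define the valuation $v$ by calibrating every alternative against the reference $\OUTCOME{a^{\star}}{z^{\star}}$. First I would show that for each $x\in\Alternatives$ there is a unique payment, which I name $v\left(x\right)$, with $\OUTCOME x{v\left(x\right)}\Ind\OUTCOME{a^{\star}}{z^{\star}}$. Indeed, by strong downward monotonicity the set $\left\{z\SetSt\OUTCOME xz\prefOver\OUTCOME{a^{\star}}{z^{\star}}\right\}$ is downward-closed; by condition~\ref{enu:REP-UNBOUND} it is nonempty and bounded above; and by continuity (condition~\ref{enu:REP-CONT}) it is closed, so it equals a half-line $\left(-\infty,v\left(x\right)\right]$ whose endpoint is an indifference point, with uniqueness following from strict monotonicity. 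Note $v\left(a^{\star}\right)=z^{\star}$. Next I would lift these indifferences using shift-invariance: since $\OUTCOME x{v\left(x\right)}\Ind\OUTCOME{a^{\star}}{z^{\star}}\Ind\OUTCOME y{v\left(y\right)}$, applying condition~\ref{enu:REP-SHIFTS} in both directions (indifference is preserved under simultaneous shifts) gives $\OUTCOME x{v\left(x\right)-\alpha}\Ind\OUTCOME y{v\left(y\right)-\alpha}$ for every $\alpha\in\Re$ and all $x,y$.

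To finish, I would verify that $\UtilTemplate u{}az=v\left(a\right)-z$ \Represents{} $\prefOver$. Fix alternatives $x,y$ and payments $p,q$, and take $\alpha=v\left(x\right)-p$, so that $\OUTCOME xp\Ind\OUTCOME y{v\left(y\right)-v\left(x\right)+p}$. Then $\OUTCOME xp\prefOver\OUTCOME yq$ iff $\OUTCOME y{v\left(y\right)-v\left(x\right)+p}\prefOver\OUTCOME yq$, which by strong downward monotonicity in money holds iff $v\left(y\right)-v\left(x\right)+p\leqslant q$, i.e. iff $v\left(x\right)-p\geqslant v\left(y\right)-q$ — exactly the representation condition. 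The main obstacle I expect is the calibration step: one must fuse continuity, strong downward monotonicity, and the two-sided unboundedness of condition~\ref{enu:REP-UNBOUND} into a clean intermediate-value-plus-uniqueness argument producing $v\left(x\right)$. Once $v$ is defined, the rest is a short chain of shift-invariance and monotonicity, so the conceptual weight sits in the calibration and in recognizing that condition~\ref{enu:REP-SHIFTS} transports indifferences, not merely weak preferences.
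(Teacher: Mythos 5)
Your forward direction and the overall skeleton of your reverse direction (calibrate each alternative against the reference outcome, transport the indifferences $\OUTCOME x{v\left(x\right)}\Ind\OUTCOME{a^{\star}}{z^{\star}}$ by shifts, then conclude by a chain of equivalences) match the paper's proof. But there is a genuine gap: you invoke \emph{strong downward monotonicity in money} as if it were a hypothesis, and it is not one. Unlike the companion claim for \POSRepresentation{} (which is explicitly stated for ``a strongly downward monotone in money preference''), this claim is stated for an arbitrary preference on $\Alternatives\times\Re$, and roughly half of the paper's proof is devoted to \emph{deriving} monotonicity from conditions (\ref{enu:REP-CONT})--(\ref{enu:REP-SHIFTS}): first weak monotonicity in money for each fixed alternative (a contradiction argument in which continuity produces an indifference $\OUTCOME a{z_{1}}\Ind\OUTCOME a{z_{3}}$, shift-invariance propagates it to intermediate payments, and this clashes with a continuity neighbourhood of a strictly-preferred intermediate point), then strong monotonicity (a nontrivial indifference interval would propagate by shifts to all of $\Re$, contradicting condition (\ref{enu:REP-UNBOUND})), and finally that the direction of monotonicity is the same for every alternative (a mixed pair $x,y$ yields a contradiction with (\ref{enu:REP-SHIFTS}) via (\ref{enu:REP-UNBOUND})). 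You use monotonicity at exactly the two load-bearing points of your argument --- the downward-closedness of $\left\{ z\SetSt\OUTCOME xz\prefOver\OUTCOME{a^{\star}}{z^{\star}}\right\}$ in the calibration step, and the final equivalence $\OUTCOME y{v\left(y\right)}\prefOver\OUTCOME y{q^{\prime}}\iff v\left(y\right)\leqslant q^{\prime}$ --- so as written your proof is incomplete for the claim as stated.

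A second, smaller omission follows from the same gap: once monotonicity is derived rather than assumed, the preference may turn out strongly \emph{upward} monotone in money, in which case it is not representable by any function of the form $v\left(a\right)-z$; the paper disposes of this case with a w.l.o.g.\ remark (such a preference is represented by a utility quasi-linear in $-z$), and a complete proof must address it one way or another. Everything else in your proposal --- the trivial forward direction, defining $v\left(x\right)$ as the endpoint of the closed, nonempty, bounded-above upper-contour set, the observation that shift-invariance transports indifferences (apply condition (\ref{enu:REP-SHIFTS}) to both weak directions), and the concluding equivalence chain --- coincides with the paper's argument, so patching the gap amounts to inserting the monotonicity derivation before your calibration step.
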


\begin{proof}~\par\nopagebreak\ignorespaces%
	\noindent{}$\underline{\Rightarrow}$: Trivial by the properties of quasi-linear
functions.

\noindent{}$\underline{\Leftarrow}$:
First, we show that given any alternative $a\in\Alternatives$, $\prefOver$ is weakly monotone in money.
Assume for contradiction that there exist $z_{1}<z_{2}<z_{3}$ s.t.
	$\left\{ \begin{array}{l}
		\OUTCOME a{z_{1}} \strictPrefBy\OUTCOME a{z_{2}} \\
		\OUTCOME a{z_{2}} \strictPrefOver\OUTCOME a{z_{3}}
	\end{array}\right.$
and $\OUTCOME a{z_{3}} \prefBy\OUTCOME a{z_{1}} $ (the cases $\left\{ \begin{array}{l} \OUTCOME a{z_{1}} \strictPrefOver\OUTCOME a{z_{2}} \\ \OUTCOME a{z_{2}} \strictPrefBy\OUTCOME a{z_{3}} \end{array}\right.$ and $\left\{ \begin{array}{l} \OUTCOME a{z_{1}} \strictPrefBy\OUTCOME a{z_{2}} \\ \OUTCOME a{z_{2}} \strictPrefOver\OUTCOME a{z_{3}} \\ \OUTCOME a{z_{3}} \strictPrefOver\OUTCOME a{z_{1}} \end{array}\right.$ are symmetric).
By $\left(\ref{enu:REP-CONT}\right)$ there exists $\xPrime{z_{3}}\in\left(z_{1},z_{3}\right]$ s.t. $\OUTCOME a{z_{1}} \Ind\OUTCOME a{\xPrime{z_{3}}} $.
W.l.o.g., $\xPrime{z_{3}}=z_{3}$. By $\left(\ref{enu:REP-SHIFTS}\right)$, for any $0\leqslant k\leqslant n$
	\[ \OUTCOME a{z_{1}} \Ind\OUTCOME a{z_{1}+\sfrac kn\left(z_{3}-z_{1}\right)} \Ind\OUTCOME a{z_{1}} \text{.} \]

On the other hand, by $\left(\ref{enu:REP-CONT}\right)$ there exists a $\varepsilon>0$ s.t.
	\[ \forall\xPrime{z_{2}}\st\left|z_{2}-\xPrime{z_{2}}\right|<\varepsilon\qquad\OUTCOME a{\xPrime{z_{2}}} \strictPrefOver\OUTCOME a{z_{1}} \text{,} \]
so we get a contradiction.

Moreover, we show that $\prefOver$ is strongly monotone in money.
Assume for contradiction that there exist $z_{1}<z_{2}$ s.t. $\OUTCOME a{z_{1}} \Ind\OUTCOME a{z_{2}} $.
Then, by weak monotonicity, for any $z\in\left[z_{1},z_{2}\right]$
	\[ \OUTCOME a{z_{1}} \Ind\OUTCOME az \Ind\OUTCOME a{z_{2}} \text{,} \]
and by $\left(\ref{enu:REP-SHIFTS}\right)$ $\OUTCOME a{z_{1}} \Ind\OUTCOME az $ for any $z\in\Re$, in contradiction to $\left(\ref{enu:REP-UNBOUND}\right)$.

Note that it cannot be that $\prefOver$ is upward monotone in money for some $x\in\Alternatives$ and downward monotone for a different $y\in\Alternatives$.
Otherwise, by $\left(\ref{enu:REP-UNBOUND}\right)$, there exist $z_{x},z_{y}\in\Re$ s.t.
	$\OUTCOME x{z_{x}} \strictPrefOver\OUTCOME {a^{\star}}{z^{\star}} \strictPrefOver\OUTCOME y{z_{y}}$,
and by the monotonicity and $\left(\ref{enu:REP-UNBOUND}\right)$, there exists $\delta>0$ large enough s.t.
	\[ \OUTCOME x{z_{x}-\delta} \strictPrefBy\OUTCOME {a^{\star}}{z^{\star}} \text{ and }\OUTCOME y{z_{y}-\delta} \strictPrefOver\OUTCOME {a^{\star}}{z^{\star}} \text{,} \]
in contradiction to $\left(\ref{enu:REP-SHIFTS}\right)$.

W.l.o.g, assume that $\prefOver$ is strongly downward monotone in money (Otherwise, we'll show it is \Represented{}  by a utility function quasi-linear  in $\left(-z\right)$).
Next, we define $v\colon\Alternatives\rightarrow\Re$ by
	\[ v\left(x\right)=\max\left\{ z\SetSt\OUTCOME xz \prefOver\OUTCOME {a^{\star}}{z^{\star}} \right\} \text{,} \]
and note that
\begin{varwidth}[t]{\linegoal} \begin{itemize}
\item By $\left(\ref{enu:REP-CONT}\right)$ the set $\left\{ z\SetSt\OUTCOME xz \prefOver\OUTCOME {a^{\star}}{z^{\star}} \right\} $ is closed.
\item By $\left(\ref{enu:REP-UNBOUND}\right)$ the set $\left\{ z\SetSt\OUTCOME xz \prefOver\OUTCOME {a^{\star}}{z^{\star}} \right\} $ is not empty and bounded from above.
\end{itemize}\end{varwidth}

\noindent{}Hence, the maximum exists. Moreover, because of $\left(\ref{enu:REP-CONT}\right)$, it holds that $\OUTCOME x{v\left(x\right)} \Ind\OUTCOME {a^{\star}}{z^{\star}} $ and
	\[ \forall x,y\in\Alternatives\qquad\OUTCOME x{v\left(x\right)} \Ind\OUTCOME y{v\left(y\right)} \text{.} \]

Now, for any $x,y\in\Alternatives$ and $z_{x},z_{y}\in\Re$
	\[ \begin{array}{rl}
		\OUTCOME x{z_{x}} \prefOver\OUTCOME y{z_{y}}  & \iff_{\left(\ref{enu:REP-SHIFTS}\right)}\OUTCOME x{v\left(x\right)} \prefOver\OUTCOME y{z_{y}+v\left(x\right)-z_{x}} \\
		 & \iff\OUTCOME y{v\left(y\right)} \prefOver\OUTCOME y{z_{y}+v\left(x\right)-z_{x}} \\
		 & \iff_{\left(MON\right)}v\left(y\right)\leqslant z_{y}+v\left(x\right)-z_{x}\\
		 & \iff v\left(x\right)-z_{x}\geqslant v\left(y\right)-z_{y}\text{.}\hfill\qed
	\end{array} \]
\end{proof}

\newpage{}
\begin{claim}
\label{claim:Characterization-TypePOSREPbyQL}A strongly downward monotone in money preference $\prefOver$ is \POSRepresented{} by a quasi-linear utility function
iff there exist an alternative $a^{\star}\in\Alternatives$ 	and a payment $z^{\star}\in\Re$ s.t.
\begin{enumerate}
\item\label{enu:POSREP-UNBOUND}
For any $x\in\Alternatives$ there exists a payment $z_{x}\in\Re$ s.t.
	$\OUTCOME x{z_{x}}\strictPrefOver \OUTCOME {a^{\star}}{z^{\star}} $.

For any $x\in\Alternatives$ there exists a payment $z_{x}\in\Re$ s.t.
	$\OUTCOME x{z_{x}}\strictPrefBy \OUTCOME {a^{\star}}{z^{\star}} $.

\item\label{enu:POSREP-SHIFTS}
For any two alternatives $x,y\in\Alternatives$ and payments $z_{x},z_{y}\in\Re$ s.t
	$\left\{ \begin{array}{l}
		\OUTCOME x{z_{x}} \prefOver\OUTCOME y{z_{y}} \\
		\OUTCOME x{z_{x}} \prefOver\OUTCOME {a^{\star}}{z^{\star}}
	\end{array}\right.$
	\[\forall\alpha\in\Re_{+}\qquad \OUTCOME x{z_{x}-\alpha} \prefOver \OUTCOME y{z_{y}-\alpha} \text{.} \]
\end{enumerate}
\end{claim}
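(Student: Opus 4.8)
The plan is to prove the two directions separately, with the forward implication essentially immediate and the converse carrying all the weight, by mirroring the construction in the proof of Claim~\ref{claim:Characterization-TypeREPbyQL} while accounting for the two structural differences: here strong downward monotonicity is assumed outright (so it need not be derived), and continuity is deliberately \emph{not} available, which is exactly what makes the threshold delicate.

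For the ($\Rightarrow$) direction, suppose $\prefOver$ is pos-represented by a quasi-linear $u(a,z)=v(a)-z$. I would take any $a^{\star}\in\Alternatives$ and set $z^{\star}=v(a^{\star})$, so that the reference outcome $\OUTCOME{a^{\star}}{z^{\star}}$ has utility exactly $0$. Condition~(\ref{enu:POSREP-UNBOUND}) then follows by choosing $z_x$ very small (utility positive, hence $\OUTCOME{x}{z_x}\strictPrefOver\OUTCOME{a^{\star}}{z^{\star}}$) or very large (utility negative, and since the reference has utility $0\geqslant0$ pos-representation still applies, giving $\OUTCOME{x}{z_x}\strictPrefBy\OUTCOME{a^{\star}}{z^{\star}}$). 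For condition~(\ref{enu:POSREP-SHIFTS}), the hypothesis $\OUTCOME{x}{z_x}\prefOver\OUTCOME{a^{\star}}{z^{\star}}$ forces $v(x)-z_x\geqslant0$; adding any $\alpha\geqslant0$ keeps the utility of $\OUTCOME{x}{z_x-\alpha}$ nonnegative, so pos-representation applies to both comparisons and the quasi-linear identity $v(\cdot)-(z-\alpha)=\big(v(\cdot)-z\big)+\alpha$ transports the inequality. This is routine.

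For the ($\Leftarrow$) direction I would imitate the REP construction. Using condition~(\ref{enu:POSREP-UNBOUND}) the set $\{z : \OUTCOME{x}{z}\prefOver\OUTCOME{a^{\star}}{z^{\star}}\}$ is nonempty and bounded above, and by the assumed strong downward monotonicity it is a down-set, so I define $v_0(x)=\sup\{z : \OUTCOME{x}{z}\prefOver\OUTCOME{a^{\star}}{z^{\star}}\}$; then $\OUTCOME{x}{z}\strictPrefOver\OUTCOME{a^{\star}}{z^{\star}}$ for $z<v_0(x)$ and $\OUTCOME{x}{z}\strictPrefBy\OUTCOME{a^{\star}}{z^{\star}}$ for $z>v_0(x)$. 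The goal is to show that $q(a,z)=v_0(a)-z$ orders the \emph{strictly}-above-reference outcomes exactly by the value $v_0(a)-z$. As in the REP proof one reduces a general comparison to the case of equal value by inserting an intermediate outcome of the same value (using monotonicity), and the mixed case (one side below the reference) is handled by transitivity through $\OUTCOME{a^{\star}}{z^{\star}}$. Condition~(\ref{enu:POSREP-SHIFTS}) plays the role that the unrestricted shift axiom played before, except now only downward shifts ($\alpha\geqslant0$) from outcomes above the reference are allowed — which is consistent with the fact that we only ever need the representation \emph{above} the threshold.

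The main obstacle is the resulting key lemma: for all $x,y$ and every depth $d>0$, $\OUTCOME{x}{v_0(x)-d}\Ind\OUTCOME{y}{v_0(y)-d}$. In the REP proof the analogous equalities come for free from continuity (the supremum is attained as an indifference with the reference); here the supremum need not be attained, and two outcomes of equal ``depth'' are not obviously indifferent. I expect to establish the lemma by using condition~(\ref{enu:POSREP-SHIFTS}) to rule out preference gaps strictly above the reference — a downward jump located strictly above the reference is inconsistent with the shift axiom, whereas a jump exactly at the reference level remains permitted, and this is precisely the slack that distinguishes pos-representation from representation — and then matching payments along the reference direction. Finally, to turn the interior ordering into a genuine pos-representation I would not place the threshold at the reference level but strictly below it: I verify directly that $q(a,z)=v_0(a)-C-z$ for any fixed $C>0$ pos-represents $\prefOver$, since its nonnegativity region is contained in the strictly-above-reference outcomes where the ordering has just been established (morally the downward-shift phenomenon of Claim~\ref{claim:IfTwoPosRepThenConstantDiff}(2)). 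The valuation $v=v_0-C$ then witnesses the claim, and this shift is what cleanly absorbs the single boundary pathology that the lack of continuity allows.
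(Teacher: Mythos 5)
Your overall architecture is sound, and your $(\Leftarrow)$ route is in fact \emph{more} careful than the paper's own proof at exactly the point you flagged. The paper defines $v(x)=\max\left\{ z\SetSt\OUTCOME xz\prefOver\OUTCOME{a^{\star}}{z^{\star}}\right\}$ and asserts that the set is closed, the maximum attained, and $\OUTCOME x{v(x)}\Ind\OUTCOME{a^{\star}}{z^{\star}}$, all from a ``continuity'' it derives from strong monotonicity; but that derivation only yields continuity along a single alternative's money line, and the cross-alternative contour set need not be closed. Concretely, take $U\left(a^{\star},z\right)=-z$ and $U\left(x,z\right)=1-z$ for $z<1$, $U\left(x,z\right)=-z$ for $z\geqslant1$: this preference is strongly downward monotone and satisfies conditions~(1) and~(2) with reference $\OUTCOME{a^{\star}}0$, yet $\left\{ z\SetSt\OUTCOME xz\prefOver\OUTCOME{a^{\star}}0\right\} =\left(-\infty,1\right)$ has no maximum and no indifference point, and the sup-valuation $v_{0}=\left(1,0\right)$ fails to \POSRepresent{} (compare $\OUTCOME x1$ with $\OUTCOME{a^{\star}}{1/2}$), while $v_{0}-C$ for any $C>0$ succeeds. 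So your two departures --- using the supremum and shifting the valuation strictly below the reference --- are not optional polish but a necessary repair of the paper's argument, which silently assumes the boundary indifference is attained.

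However, your key lemma --- $\OUTCOME x{v_{0}(x)-d}\Ind\OUTCOME y{v_{0}(y)-d}$ for every depth $d>0$ --- cannot be established, as you propose, ``by using condition~(2) to rule out preference gaps strictly above the reference,'' because conditions~(1), (2) and strong downward monotonicity do \emph{not} rule such gaps out. Counterexample: the lexicographic order with $\OUTCOME p{z_{p}}\strictPrefOver\OUTCOME q{z_{q}}$ iff $z_{p}<z_{q}$, or $z_{p}=z_{q}$ and $p$ precedes $q$ in some fixed strict order on $\Alternatives$. This order is invariant under common payment shifts, so condition~(2) holds for every choice of reference, and~(1) and strong monotonicity hold as well; yet no two outcomes with distinct alternatives are ever indifferent, so your lemma fails at every depth, and indeed the preference is not \POSRepresented{} by any quasi-linear function (\POSRepresentation{} forces $\OUTCOME x{w(x)}\Ind\OUTCOME{a^{\star}}{w\left(a^{\star}\right)}$ at the zero-utility level). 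What saves the claim is the paper's standing assumption that preferences lie in $\TypeDom$, i.e.\ are \Represented{} by \emph{some} real-valued utility $U$ strictly decreasing in money --- the lexicographic order is not. With that in hand your lemma does go through: a strict preference $\OUTCOME x{v_{0}(x)-d}\strictPrefOver\OUTCOME y{v_{0}(y)-d}$ propagates via~(2) to every depth $\varepsilon\in\left(0,d\right]$ (a weak reverse preference at a shallower depth would shift down to a contradiction), and shifting $\OUTCOME y{v_{0}(y)-s}\strictPrefOver\OUTCOME{a^{\star}}{z^{\star}}\strictPrefOver\OUTCOME x{v_{0}(x)+\delta}$ downward gives the squeeze $\OUTCOME y{v_{0}(y)-\varepsilon}\prefOver\OUTCOME x{v_{0}(x)-\varepsilon+\eta}$ for all $\eta>0$; together these produce uncountably many pairwise disjoint nonempty intervals $\left(U\left(\OUTCOME y{v_{0}(y)-\varepsilon}\right),U\left(\OUTCOME x{v_{0}(x)-\varepsilon}\right)\right)$ in $\Re$, which is impossible. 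You must add this representability (or an equivalent cross-alternative density) step; as written, the proposal's central lemma is unproved, and over raw total orders it --- and the claim itself --- is false.
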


\begin{proof}~\par\nopagebreak\ignorespaces%
	\noindent{}$\underline{\Rightarrow}$:
Trivial by the properties of quasi-linear functions and choosing $a^{\star}$ and $z^{\star}$ s.t. the quasi-linear function satisfies $\UtilTemplate u{}{a^{\star}}{z^{\star}}=0$.

\noindent{}$\underline{\Leftarrow}$:
First, we note that since $\prefOver$ is strongly monotone, it is continuous in money.\footnote{%
		We note that this claim does not hold for weakly downward monotone preferences.
		For example, the preference which is defined by $\UtilTemplate u{}az =\begin{cases}
    	    z\leqslant0 & 1\\
        	z>0 & 0
	    \end{cases}$ is weakly downward monotone in money but is not continuous.
	    The preference is not continuous since $\forall n\quad\OUTCOME a{\sfrac 1n} \prefBy\OUTCOME a{\sfrac 12}$ but $\OUTCOME a0 \strictPrefOver \OUTCOME a{\sfrac 12} $.}
Let $a\in\Alternatives$ and $\left\{ z_{t}\right\} ,\left\{ w_{t}\right\} $ be two sequences of monetary transfers s.t.
    $\forall t\quad\OUTCOME a{z_{t}} \prefBy\OUTCOME a{w_{t}} $,
    $\lim_{t\rightarrow\infty}z_{t}=z$, and $\lim_{t\rightarrow\infty}w_{t}=p$.
Then, $\forall t\quad w_{t}\leqslant z_{t}$ and hence $w\leqslant z$ and $\OUTCOME aw \prefOver\OUTCOME az $.

Next, we define $v\colon\Alternatives\rightarrow\Re$ by
    \[ v\left(x\right)=\max\left\{ z\SetSt\OUTCOME xz \prefOver\OUTCOME {a^{\star}}{z^{\star}} \right\} \text{,} \]
and note that
    \begin{varwidth}[t]{\linegoal}\begin{itemize}
        \item By continuity the set $\left\{ z\SetSt\OUTCOME xz \prefOver\OUTCOME {a^{\star}}{z^{\star}} \right\} $ is closed.
        \item By $\left(\ref{enu:POSREP-UNBOUND}\right)$ the set $\left\{ z\SetSt\OUTCOME xz \prefOver\OUTCOME {a^{\star}}{z^{\star}} \right\} $ is not empty and bounded from above.
    \end{itemize}\end{varwidth}

\noindent{}Hence, the maximum exists. Moreover, because of the continuity, it holds that $\OUTCOME x{v\left(x\right)} \Ind \OUTCOME {a^{\star}}{z^{\star}} $ and
    \[ \forall x,y\in\Alternatives\qquad \OUTCOME x{v\left(x\right)} \Ind \OUTCOME y{v\left(y\right)} \text{.} \]

Now, for any $x,y\in\Alternatives$ and $z_{x},z_{y}\in\Re$ s.t. $v\left(x\right)-z_{x}\geqslant0$
    \[\begin{array}{rl}
        \OUTCOME x{z_{x}} \prefOver\OUTCOME y{z_{y}}
        & \iff_{\left(\ref{enu:POSREP-SHIFTS}\right)}\OUTCOME x{v\left(x\right)} \prefOver\OUTCOME y{z_{y}+v\left(x\right)-z_{x}} \\
        & \iff\OUTCOME y{v\left(y\right)} \prefOver\OUTCOME y{z_{y}+v\left(x\right)-z_{x}} \\
        & \iff_{\left(MON\right)}v\left(y\right)\leqslant z_{y}+v\left(x\right)-z_{x}\\
        & \iff v\left(x\right)-z_{x}\geqslant v\left(y\right)-z_{y}\text{.}\hfill\qed
    \end{array} \]
\end{proof}
\end{document}